\documentclass[lettersize,journal]{IEEEtran}
\usepackage{amsmath,amsfonts}
\usepackage{array}
\usepackage{textcomp}
\usepackage{stfloats}
\usepackage{url}
\usepackage{verbatim}
\usepackage{graphicx}
\usepackage{cite}
\usepackage{charlesmacros}
\usepackage{hyperref}
\usepackage{cleveref}
\usepackage{orcidlink}
\usepackage{tcolorbox}
\usepackage{footnote}

\BeforeBeginEnvironment{assbox}{\savenotes}
\AfterEndEnvironment{assbox}{\spewnotes}

\newtcolorbox{theobox}{colback=green!10, arc=3mm,colframe=red!0, boxrule=0pt}
\newtcolorbox{propbox}{colback=blue!5, arc=3mm,colframe=red!0, boxrule=0pt}
\newtcolorbox{lembox}{colback=blue!5, arc=3mm,colframe=red!0, boxrule=0pt}
\newtcolorbox{assbox}{colback=red!5, arc=3mm,colframe=red!0, boxrule=0pt}
\newtcolorbox{defbox}{colback=red!5, arc=3mm,colframe=red!0, boxrule=0pt}

\BeforeBeginEnvironment{lemma}{\begin{lembox}}
\AfterEndEnvironment{lemma}{\end{lembox}}
\BeforeBeginEnvironment{proposition}{\begin{propbox}}
\AfterEndEnvironment{proposition}{\end{propbox}}
\BeforeBeginEnvironment{theorem}{\begin{theobox}}
\AfterEndEnvironment{theorem}{\end{theobox}}

\hyphenation{op-tical net-works semi-conduc-tor IEEE-Xplore}

\newcommand{\lA}{\tilde{{A}}}
\newcommand{\cy}{\widehat{\bm{y}}}
\newcommand{\A}{{A}}
\newcommand{\cA}{\widehat{{A}}}
\newcommand{\cB}{\widehat{{B}}}
\newcommand{\Al}{\widehat{{A}}}

\newcommand{\y}{\bm{y}}
\newcommand{\ly}{\tilde{\bm{y}}}
\newcommand{\X}{{X}}
\newcommand{\x}{{x}}
\newcommand{\lX}{\widetilde{{X}}}
\newcommand{\z}{\bm{z}}
\newcommand{\B}{{B}}

\newcommand{\C}{{D}}
\newcommand{\cC}{\widehat{D}}

\newcommand{\lC}{\tilde{D}}
\newcommand{\lB}{\widetilde{{B}}}

\newcommand{\tL}{\widetilde{L}}
\newcommand{\inner}[2]{\langle #1, #2 \rangle}
\newcommand{\accX}{\lbar{\X}}

\DeclareMathOperator{\dist}{dist}

\begin{document}
\title{Distributed Adaptive Spatial Filtering with Inexact Local Solvers}

\author{
Charles Hovine \orcidlink{0000-0001-6657-1066}, Alexander Bertrand
\orcidlink{0000-0002-4827-8568}

\thanks{This project has received funding from the European Research Council
(ERC) under the European Union's Horizon 2020 research and innovation programme
(grant agreement No. 802895 and No. 101138304) and from the Flemish Government under the
``Onderzoeksprogramma Artifici\"ele Intelligentie (AI) Vlaanderen'' programme. Views and opinions expressed are however those of the author(s) only and do not necessarily reflect those of the European Union or ERC. Neither the European Union nor the ERC can be held responsible for them.}
\thanks{Charles Hovine and Alexander Bertrand are with the STADIUS Center for
        Dynamical Systems, Signal Processing and Data Analytics and with the Leuven.AI institute for Artificial Intelligence at KU Leuven,
        Leuven 3001, Belgium (e-mails: \{charles.hovine,
alexander.bertrand\}@esat.kuleuven.be).}

}


\maketitle

\begin{abstract}
The Distributed Adaptive Signal Fusion (DASF) framework is a meta-algorithm for computing data-driven spatial filters in a distributed sensing platform with limited bandwidth and computational resources, such as a wireless sensor network. The convergence and optimality of the DASF algorithm has been extensively studied under the assumption that an exact, but possibly impractical solver for the local optimization problem at each updating node is available. In this work, we provide convergence and optimality results for the DASF framework when used with an inexact, finite-time solver such as (proximal) gradient descent or Newton's method. We provide sufficient conditions that the solver should satisfy in order to guarantee convergence of the resulting algorithm, and a lower bound for the convergence rate. We also provide numerical simulations to validate these theoretical results.
\end{abstract}
\begin{IEEEkeywords}
    Distributed signal processing, Wireless sensor networks, Adaptive spatial filtering, Optimization.
\end{IEEEkeywords}

\section{Introduction}

A wireless sensor network (WSN) is a networked collection of sensor nodes equipped with communication and computing capabilities. Each node typically senses a single or multi-channel signal, which can either be transmitted to a fusion center for processing, or be processed \emph{in-network}. 
Spatial filtering is a common processing task in such networks, allowing the extraction of some target signal from the aggregated sensor signals. A spatial filter typically consists in a linear combination of the sensor signals, producing a lower-dimensional output signal which is optimal in some sense, for example, in terms of signal-to-noise ratio, output power, or correlation with a target signal. Common examples include principal component analysis (PCA) \cite{Hotelling1933}, Wiener filtering \cite{wiener1949extrapolation}, canonical correlation analysis (CCA) \cite{Bartlett1941}, linearly constrained minimum variance beamforming (LCMV) \cite{markovich2015optimal} and Max-SNR filtering \cite{van1988beamforming}.

The most straightforward way to compute such filters is to aggregate the sensor data at a fusion center, where an off-the-shelf algorithm can be used to process the data. This convenience comes with several drawbacks. Firstly, the transmission of the sensor nodes' raw sensor signals are likely to become the main energy bottleneck of the system \cite{corke2010environmental}, in particular in networks where multi-hop data relaying is required, resulting in shorter battery life, making remote deployments impractical. This poses a particular challenge for applications that generate continuous streams of high-throughput sensor data, such as audio in acoustic sensor networks \cite{bertrand2010adaptive}, video in CCTV systems or video sensor networks \cite{elharrouss2021review}, biomedical signals in EEG\footnote{Electro-encephalography.} sensor networks \cite{narayanan2020optimal, narayanan2021eeg}, or radio signals in multistatic radars \cite{inggs2014multistatic, inggs2014nextrad}.
Secondly, the fusion center needs to have sufficient computing capabilities to handle the continuous processing of the stream of high-dimensional input signals. This comes with scalability issues, as the increase of the number of nodes or input signals will eventually lead to unrealistic hardware requirements at the fusion center in terms of bandwidth, memory, and compute. 
Finally, the use of a fusion center introduces a single point of failure \cite{Haykin2010}, which can be prohibitive if maintenance actions are difficult or costly, such as in the case of remote deployments. The drawbacks associated with centrally processing the data motivate the use of distributed signal processing algorithms that distribute the computational burden amongst the nodes and favor local processing over data transmission. 

We can identify two classes of distributed signal processing algorithms, associated with two corresponding kinds of distributed datasets. The first kind covers algorithms operating on the observations of a stochastic signal $\bm{y}(t)\in\mathbb{R}^M$, whose observations are distributed across the nodes, such that each node has access to all $M$ entries of $\bm{y}(t)$, but only a subset of the realizations of the random data associated with those entries. This allows each node to locally estimate the covariance structure of the full signal. The optimality criterion can in this case typically be expressed as a sum of local per-node objectives, and the distributed algorithm will usually consist in performing some local processing on the node, and then exchanging some low-dimensional vector containing intermediate estimates of the optimization variables (rather than signal samples). Typical examples of algorithms suited for such datasets are the alternating direction method of multipliers (ADMM) \cite{boyd2011distributed}, consensus and diffusion strategies\cite{lopes2008diffusion,sayed2014adaptation}, and many of the techniques studied by the federated learning community \cite{li2020federated}. 

The second kind of distributed algorithms deals with datasets where the entries (or channels) of $\bm{y}(t)$ are spread across the nodes, such that no individual node can estimate the correlation structure of the network-wide signal $\bm{y}(t)$ (i.e. the signal consisting in the concatenation of all the channels of the per-node signals). Such algorithms typically require that signal samples (rather than low-dimensional parameter vectors) are shared between the nodes, in order to learn the inter-node statistics and to properly evaluate the optimality criterion. Distributed spatial filtering problems or regression/classification problems with distributed features  fit in this second class, and are more difficult to solve if these inter-node statistics are not known a-priori. For streaming data or data sets where the number of observations of $\bm{y}(t)$ is much larger than the number of channels $M$, the typical work-horse distributed methods (ADMM, consensus, diffusion, etc.) do not straightforwardly apply \cite{Kanatsoulis2018}, or result in highly inefficient multi-rate communication schemes with nested iterations\cite{Scaglione2008, Li2011}. Therefore, they are typically solved with ad-hoc problem-specific algorithms, due to the absence of a one-fit-all off-the-shelf solution \cite{zeng2013distributed, Li2011, markovich2015optimal, Fu2018, Bertrand2015b, mota2011distributed, zhang2019distributed, hovine2022maxvar}. \cite{liu2022fedbcd} describes an algorithm to solve such distributed-features problem, but requires a central node (although most of the computational load is distribued amongst the nodes). In addition, it is limited to smooth and unconstrained problems.

The Distributed Adaptive Signal Fusion (DASF) framework \cite{musluoglu2022unified_p1, musluoglu2022unified_p2, hovine2023distributed, hovine2024distributed} aims to bridge the gap by providing a scalable and bandwidth-efficient algorithm that adaptively computes the outputs of data-driven spatial filters satisfying some optimality criterion, in a WSN or other distributed setting with constrained resources. Given an existing solver for the particular optimization problem associated with the filter of interest, the DASF algorithm can adaptively compute the desired filter along with the filtered signal by relying on the exchange  of low-dimensional compressed samples between the nodes, and the repeated computation of the solution of a lower-dimensional version of the original centralized optimization problem at each node. The DASF framework is mostly problem-agnostic, and most traditional linear spatial filtering or estimation problems can be solved by plugging-in the appropriate (and unmodified) solver for the corresponding centralized problem. However, the original DASF framework was shown to converge under the assumption that the solver produces an (near-)exact solution every time it is called.  In practice though, an exact solver might not be available, either due to the nature of the problem (e.g. non-convexity), or simply because obtaining an exact solution would be too expensive or time consuming, hindering the adaptivity of the algorithm. In addition, the convergence of DASF relies on hard-to-check assumptions regarding the parametric continuity of the problem. In this work, we show that an exact solution is not required for optimality, and that a few iterations of an iterative solver are in practice sufficient to ensure convergence to an ``interesting'' filter (i.e. satisfying some relaxed optimality condition such as mere stationarity with regards to the optimality criterion). We are also able to eliminate some of the original convergence assumptions, making the algorithm applicable to a broader set of problems.

The paper is organized as follows. Section \ref{sec:problem} describes the general objective, as well as the family of problems covered by DASF. Section \ref{sec:dasf} gives a brief overview of the original DASF algorithm. Section \ref{sec:inexactalg} describes how DASF can be modified to work with an inexact iterative solver, rather than an exact one. This section  describes our main contribution, including a proof of the convergence of DASF used with such a solver. Section \ref{sec:sims} offers a validation of our theoretical results via numerical simulations. We conclude with a brief discussion in Section \ref{sec:concl}.

\section{Problem Statement}
\label{sec:problem}
We consider a WSN consisting of $K$ nodes, each sensing an $M_k$-dimensional stochastic signal $\y_k(t)$, where $t$ denotes a sample index (usually related to a time index). We denote the $M$-dimensional network-wide sensor signal $\y(t) = [\y_1^T(t),\dots,\y_K^T(t)]^T$ with values in $\R^M$, with $M=\sum_k M_k$. We assume that $\y$ is short-term stationary and ergodic, such that its statistics can be computed over short-term sample batches. The DASF framework assumes that the nodes in the network collaborate to compute an optimal linear spatial filter $\X\in\R^{M\times Q}$ in a bandwidth-efficient manner. Similarly to $\y$, we define the per-node partition of $\X$ as  $\X = [\X_1^T,\dots,\X_K^T]^T$, where we refer to $\X_k$ as the ``local'' filter associated with node $k$. The $Q$-channel filtered signal $\z(t)\triangleq \X^T \y(t)$, and hence the filter $\X$ should satisfy some optimality criterion, which can be expressed as\cite{musluoglu2022unified_p1}

\begin{equation}
    \label{eq:generic_problem}
    \begin{split}
    \X^\star & \in \argmin_{\X\in\mathbb{R}^{M\times Q}} \; \varphi(\X^T \bm{y}(t), \X^T\B)\\
    \st &\forall j\in\mc{J}_I,\; \eta_{j}(\X^T\y(t), \X^T \C_j ) \leq 0,\\
            &  \forall j\in\mc{J}_E, \;\eta_{j}(\X^T\y(t), \X^T \C_j ) = 0
    \end{split}
\end{equation}
where the matrices $\B$ and $\C_{j}$  are
deterministic matrices known by every node, $\varphi$ is a
smooth real-valued function encoding some design objective for the filter
output, $\mc{J}_I$ and $\mc{J}_E$ are the sets of inequality and equality constraints indices, respectively, and the $\eta_{j}$ are smooth
functions enforcing some hard
constraints on the filter outputs and/or filter coefficients. 

In order for the above functions to be real-valued, we assume that they implicitly contain an operator turning the random argument $\X^T\y(t)$ into a deterministic one, by applying, e.g., an expectation operator.  For the purpose of mathematical tractability, we assume for the theoretical analysis of the algorithm that $\y(t)$ is stationary (i.e. that its statistics are independent of time) and ergodic (i.e. its statistics can be estimated using samples averages), and its statistics are assumed to be perfectly estimated at any point in time. However, the expectation operators will in practice be estimated by temporal averages of $\y(t)$. The impact of such an approximation on the algorithm's performance is outside of the scope of this paper, we therefore refer the interested reader to the stochastic optimization literature \cite{kim2015guide} for details.

The peculiar structure of problem \eqref{eq:generic_problem} ensures that $\X$ always appears either in an inner product with $\y(t)$ or in an inner product with some pre-determined matrices ($\B$ or $\C_j$). Most traditional spatial filtering or linear estimation problems satisfy this structure (two examples are given hereafter, and more extensive illustrations can be found in \cite{musluoglu2022unified_p1}). The DASF algorithm will exploit this inner-product structure to achieve a bandwidth reduction. Note that the framework also allows for complex-valued signals, in which case all transpose operators should be replaced with conjugate (a.k.a. Hermitian) transpose operators. Finally, it is noted that we only include a single $\y$ and $\B$-matrix in the loss function $\varphi$, and a single $\C$-matrix in each constraint. This is for the sake of notational convenience, as the DASF framework also allows multiple instances of each (details in \cite{musluoglu2022unified_p1}). 

In order to also cover the treatment of non-smooth objective functions, we also allow the optimality criterion to be expressed as \cite{hovine2023distributed, hovine2024distributed}
\begin{equation}
    \label{eq:generic_problem_ns}
    \begin{split}
    \X^\star & \in \argmin_{\X\in\mathbb{R}^{M\times Q}} \; \varphi(\X^T \bm{y}(t), \X^T\B) + \gamma(\X^T\A)\\
        \st &  \forall  \; k\in \mc{K},\\
            &\forall j\in\mc{J}_I^k,\; \eta_{j}(\X_k^T\y_k(t), \X^T_k \C_{j,k} ) \leq 0,\\
            &  \forall j\in\mc{J}_E^k, \;\eta_{j}(\X_k^T\y_k(t), \X^T_k \C_{j,k} ) = 0
    \end{split}
\end{equation}
where the main differences with \eqref{eq:generic_problem}  are the restriction of the constraints to be per-node block-separable, i.e., the constraints index sets  $\mc{J}_I^k$ and $\mc{J}_E^k$ describe the constraints for a specific node $k$, and the functions $\eta_j$ associated with a particular node $k$ can only depend on the local filter $\X_k$ associated with that node, and the addition of $\gamma$, a possibly non-smooth function. The function $\gamma$ is also required to be per-node block-separable, i.e., there exist matrices $A_k$ and functions $\gamma_k$ such that
\begin{equation}
\label{eq:per_node_block_separable}
\gamma(\X^T\A) = \sum_{k\in\mc{K}} \gamma_k(\X_k^T\A_k),
\end{equation}
which also implies that $\A$ in \eqref{eq:per_node_block_separable} must be a block-diagonal matrix with blocks $\A_k$. This allows for most of the typical regularizers, including the $\ell_1$, $\ell_2$, and the $\ell_{1,2}$ norm. Note that for simplicity, we previously assumed in \cite{hovine2024distributed} that $\gamma$ was a convex function, but this assumption is here relaxed, and $\gamma$ is thus allowed to be non-convex.

We refer to the parametric optimization problem defined by \eqref{eq:generic_problem} as $\mathbb{P}_S(\y(t), \B, \mc{D})$, and $\mathbb{P}_{NS}(\y(t),\B, \mc{D}, \A)$ for \eqref{eq:generic_problem_ns}. Here the set $\mc{D}$ denotes the collection of all the  matrices $D_j$ or $D_{j,k}$.
These generic problem formulations encompass a wide range of well-known spatial filtering problems, including PCA, CCA, LCMV, and many more \cite{musluoglu2022unified_p1}. For example, we can express a PCA filter as
\begin{equation}
    \begin{split}
        \min_\X &\quad -\tr{\X^T\E{\y\y^T}\X}\\
                &\st  \X^T\X =  I,
\end{split}
\end{equation}
where $\E{\cdot}$ denotes the expectation operator. A $\C$ matrix hides in the constraint, which can equivalently be written 
\begin{equation}
    \X^TDD^T\X = I,\quad \C=I.
\end{equation}
This might seems superfluous, but $\C$ actually plays an important aglorithmic role, as its presence will allow the local sub-problems solved during the application of DASF to keep the same structure ${P}_S(\y(t), \B, \{\C\})$ as the original one, but where $\C\neq I$. The $\ell_{1,2}$-regularized multi-channel Wiener filter \cite{vaseghi1996wiener} is a non-smooth example that fits in the class of problems defined by \eqref{eq:generic_problem}-\eqref{eq:generic_problem_ns}:
\begin{equation}
    \label{eq:wiener}
        \begin{split}
                \min_{\X} &\; \E{\norm{\X^T\y-\bm{d}}_F^2} + \sum_{k}
                \norm{\X_k}_{F}\\
        \end{split}
\end{equation}
where $\norm{\cdot}_F$ denotes the Frobenius norm, and
$\bm{d}(t)$ is a known multichannel signal taking values in $\mathbb{R}^{Q}$. The filter produces an estimate of the signal $\bm{d}(t)$ from the measurements $\bm{y}(t)$, and the
non-smooth regularization term encourages nodes that do not significantly contribute to
the objective to set their filters to zero. Note that $\X$ does not appear in an inner product in the second term of \eqref{eq:wiener}, yet it fits \eqref{eq:generic_problem}-\eqref{eq:generic_problem_ns} when explicitly writing the Frobenius norm as $\norm{\X_k^T \A_k}_F$ with $\A_k=I$.

\section{DASF Overview}
\label{sec:dasf}
\begin{figure*}
        \usetikzlibrary{arrows.meta}
\centering\hfill
\tikzstyle{nonupdate}=[draw=none, circle, minimum size=10pt, fill=blue!20]
\tikzstyle{nonupdate_in}=[rectangle, dashed, draw=black, fill=blue!5]
\tikzstyle{update}=[draw=none, circle, minimum size=10pt, fill=orange!20]
\tikzstyle{arr}=[-{Triangle[scale=1.4]}]
\tikzstyle{arr2}=[{Triangle[scale=1.4]}-]
\subfigure[Data aggregation and local solution]{
    \begin{tikzpicture}
    \def\dist{2.5}
    \node[nonupdate] (k1) at (0,0) {1};
    \node[nonupdate] (k2) at (-\dist,0) {2};
    \node[nonupdate] (k3) at (-\dist,-\dist) {3};
    \node[update] (q) at (0,-\dist) {$q$};

    \draw[arr] (k1) -- (q) node[midway, right] {$\cy_1$};
    \draw[arr] (k2) -- (q) node[midway, above right] {$\cy_2$};
    \draw[arr] (k3) -- (q) node[midway, below] {$\cy_3$};

    \node[above of=k1, node distance=15pt, anchor=south, nonupdate_in] (k1t) {$\cy_1 = \X^{iT}_1\y_1$};
    \node[above of=k2, node distance=15pt, anchor=south, nonupdate_in] (k2t) {$\cy_2 = \X^{iT}_2\y_2$};
    \node[above of=k3, node distance=15pt, anchor=south, nonupdate_in] (k3t) {$\cy_3 = \X^{iT}_3\y_3$};

    \draw[dashed] (k1) to (k1t.south);
    \draw[dashed] (k2) to (k2t.south);
    \draw[dashed] (k3) to (k3t.south);

    \node[above right of=q, anchor=west, node distance=40pt, rectangle, draw=black, dashed, fill=orange!5, align=center] (vecq) {$
            \tilde{\y}_q=\begin{bmatrix}
                \y_q\\
                \cy_1\\
                \cy_2\\
                \cy_3
            \end{bmatrix},
         \tilde{\X}_q^\star=\begin{bmatrix}
                \X_q\\
                \lX_1\\
                \lX_2\\
                \lX_3
            \end{bmatrix}$\\~\\
{ solution of }$\mathbb{P}(\ly^i,\lB^i, \tilde{\mc{D}}^i, \lA^i)$

    };

    \draw[dashed] (q) to ($(vecq.west)+(0,0)$);

\end{tikzpicture}
}\hfill
\subfigure[Parameters update]{
    \begin{tikzpicture}
    \def\dist{2.5}
    \node[nonupdate] (k1) at (0,0) {1};
    \node[nonupdate] (k2) at (-\dist,0) {2};
    \node[nonupdate] (k3) at (-\dist,-\dist) {3};
    \node[update] (q) at (0,-\dist) {$q$};

    \draw[arr2] (k1) -- (q) node[midway, right] {$\lX_1$};
    \draw[arr2] (k2) -- (q) node[midway, above right] {$\lX_2$};
    \draw[arr2] (k3) -- (q) node[midway, below] {$\lX_3$};

    \node[above of=k1, node distance=15pt, anchor=south, nonupdate_in] (k1t) {$\X^{i+1}_1 = \X^{i}_1\lX_1$};
    \node[above of=k2, node distance=15pt, anchor=south, nonupdate_in] (k2t) {$\X^{i+1}_2 = \X^{i}_2\lX_2$};
    \node[above of=k3, node distance=15pt, anchor=south, nonupdate_in] (k3t) {$\X^{i+1}_3 = \X^{i}_3\lX_3$};
    \node[above right of=q, node distance=15pt, anchor=west, rectangle, draw=black, dashed, fill=orange!5] (vecq) {$\X^{i+1}_q = {\X}_q$};

    \draw[dashed] (k1) to (k1t.south);
    \draw[dashed] (k2) to (k2t.south);
    \draw[dashed] (k3) to (k3t.south);
    \draw[dashed] (q) to ($(vecq.west)+(0,0)$);
\end{tikzpicture}
}\hfill
        \caption{Overview of a single iteration of the DASF Algorithm in fully-connected networks, where node $q$ is the updating node. The matrices $ \B_k, \C_{j,k}$ and $\A_k$ are omitted for readability, but their treatment is the same as $\y_k$.}
        \label{fig:dasf_steps}
\end{figure*}
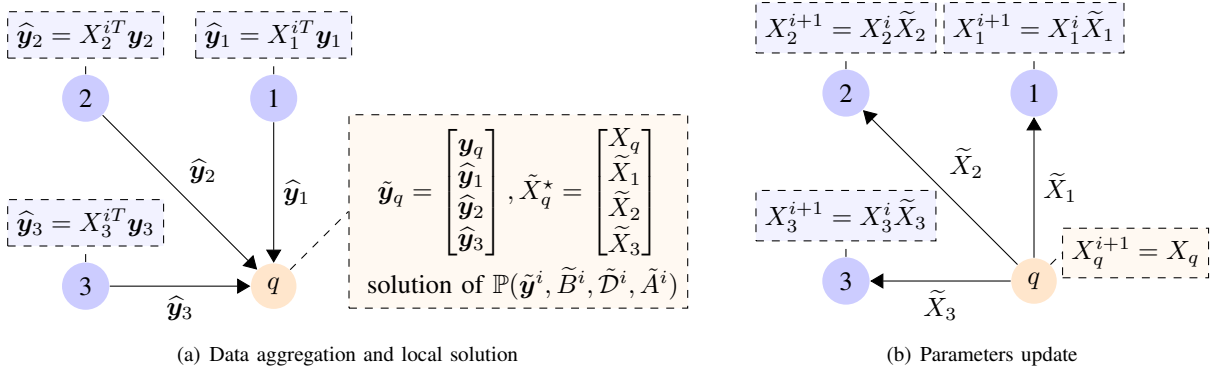

In this section, we briefly review the DASF algorithm. We refer to \cite{musluoglu2022unified_p1} for more details and illustrative examples.

The DASF algorithm collaboratively updates the estimate of the optimal filter
$\X^\star$ and tracks the filtered signal $\z$ by solving a local ``miniature''
version of \eqref{eq:generic_problem} or \eqref{eq:generic_problem_ns} at a particular so-called \emph{updating node} whose role is taken by a different node at every iteration. 
This makes the DASF algorithm ``plug-and-play''\footnote{ A Matlab and Python toolbox implementing this plug-and-play functionality is available in \cite{dasf_toolbox}.} as it only requires a solver for the centralized problem \eqref{eq:generic_problem} or \eqref{eq:generic_problem_ns}, which is directly re-used as a local ``sub-solver'' in the compressed local problems in the different iterations of the distributed algorithm \cite{musluoglu2022unified_p1}.

As the focus of this paper is the local sub-solver used by DASF, which is largely independent of the rest of the algorithmic framework, we can afford to limit our description of the algorithm to fully-connected networks.  It is important to note that this is purely for intelligibility and notational convenience, but without loss of generality. The results described in this paper
can be straightforwardly extended to arbitrary topology networks, as in \cite{musluoglu2022unified_p1, musluoglu2022unified_p2} (for the smooth case) and \cite{hovine2024distributed} (for the non-smooth case) and the convergence proof in Section \ref{sec:convergence} is kept sufficiently generic to also cover the case of arbitrary-topology networks. As a result, the extension of DASF to arbitrary topologies that is detailed in \cite{musluoglu2022unified_p1} and \cite{hovine2024distributed} straightforwardly applies to the inexact version of DASF introduced in this paper, yet it is omitted here to reduce overlap and avoid the extra layer of notational complexity that it would incur.

An iteration $i$ of the DASF algorithm is fully defined by the current updating node index $q^i$ (we often drop the iteration index when the iteration is clear from the context) and the current estimate of the filter $\X^i$, and consists of the following three steps (after a random initialization of $\X^0$):
\begin{description}
        \item[1. Data Aggregation] {
                        At the beginning of a new iteration $i$, a new {updating node} $q$
                        is selected. Each node $k$ collects a new batch of $N$ samples of $\y(t)$ for $t=iN,\dots(i+1)N-1$, and
                        compresses these into $Q$-dimensional\footnote{Here we assume that $Q<M_k$, otherwise there is no compression possible at node $k$. Nodes for which $M_k\leq Q$ can add their channels $y_k$ to the channels of a neighboring node to create a virtual ``super-node'' within the DASF algorithm. Note that in multi-hop network topologies, a bandwidth reduction can even be achieved if $Q>M_k$ in all nodes  (when compared to relayed data aggregation). We refer to \cite{musluoglu2022unified_p1} for more details.} samples according to
                        \begin{equation}
                                \label{eq:compressed_data}
                                \cy_k^i \triangleq \X_k^{iT}\y_k.
                        \end{equation}
                        The compressed $N$-samples batch is then transmitted to the updating node, along with the compressed matrices 
                        \begin{align}
                                \label{eq:compressed_data_2}
                                \cC_{j,k} &\triangleq \X_k^{iT}\C_{j,k}\\
                                \cA_k^i &\triangleq X_k^{iT}\A_k    \\
                                \cB_k^{iT} &\triangleq \X_k^{iT}\B_{k}
                        \end{align}
                        (typically of negligible size compared to the transmission of \eqref{eq:compressed_data} when the sample batch size $N$ is large). Here, $\B_k$ is the block of $\B$ associated with the block $\X_k$ of $\X$. In order to offer a unifying description of the algorithm for the smooth and non-smooth case, we also define the same relationship between $\C_{j,k}$ and $\C_j$ (i.e. in the smooth case $\C_{j,k}$ is the block of $D_j$ associated with $\X_k$ in the product $X^TD_j$), and in order to be consistent with the definitions of problems \eqref{eq:generic_problem} and \eqref{eq:generic_problem_ns}, we require that $\C_{j,k}=0$ if $j\notin \mc{J}_{E}^k\cup \mc{J}_{I}^k$ in the non-smooth case (i.e. block-separable constraints), and $\A_k=0$ in the smooth-case (i.e. smooth objective function). 
                }
        \item[2. Local Solution] {
                        The updating node constructs a \emph{local} view $\tilde{\y}_q$ of $\y$ by
                        concatenating  the received signals with its own sensor
                        signals such that
                        \begin{equation}
                                \label{eq:local_data}
                                \ly^i = \begin{bmatrix}
                                        \y_q^T & \cy_1^{iT} & \cdots & \cy_{q-1}^{iT} & \cy_{q+1}^{iT} & \cdots &
                                        \cy_K^{iT}
                                \end{bmatrix}^T.
                        \end{equation}
                        Similarly, it constructs
                        \begin{equation}
                                \label{eq:local_data_2}
                                \begin{split}
                                        \lA^i &= \blkd(\A_q, \cA_1^i, \ldots,
                                        \cA_{q-1}^i, \cA_{q+1}^i, \ldots, \cA_K^i),\\
                                        \lB^i &= \begin{bmatrix}
                                                \B_q^T & \cB_1^{iT} & \cdots & \cB_{q-1}^{iT} & \cB_{q+1}^{iT} & \cdots &
                                                \cB_K^{iT}
                                        \end{bmatrix}^T,
                                        \textrm{ and }\\
                                        \lC^i_j &= \begin{bmatrix}
                                                \C_{j,q}^T & \cC_{j,1}^{iT} & \cdots & \cC_{j,q-1}^{iT} & \cC_{j,q+1}^{iT} & \cdots &
                                                        \cC_{j,K}^{iT}
                                        \end{bmatrix}^T.
                                        \end{split}
                                \end{equation}
                                The updating node then obtains the local solution ${\lX}^\star$ by solving $\mathbb{P}(\ly^i,\lB^i, \tilde{\mc{D}}^i, \lA^i)$ (which here refers to either $\mathbb{P}_S$ or $\mathbb{P}_{NS}$) for the received $N$-sample batch, where, simirlarly to $\mc{D}$, $\tilde{\mc{D}}^i$ denotes the collection of $\lC^i_j$, which can be viewed as a low-dimensional instance of the original optimization problem. Note that the solver for the network-wide problem $\mathbb{P}(\y,\B, \tilde{\mc{D}}, \A)$ can thus also be used as-is to solve the ``compressed'' problem $\mathbb{P}(\ly^i,\lB^i, \tilde{\mc{D}}^i, \lA^i)$ at updating node $q$ \cite{musluoglu2022unified_p1}. 
                        }
                \item[3. Parameters Update]
                        The updating node partitions the local solution $\lX^\star$ as\footnote{Note that $\X^\star_q\in\mathbb{R}^{M_q\times Q}$ and $\lX^\star_k\in\mathbb{R}^{Q\times Q}$ for all other $k\neq q$. This follows from the dimensions of the partitioning in \eqref{eq:local_data}.}
                        \begin{equation}
                                \label{eq:block_struct_local_filter}
                                \lX^\star = [X_q^{\star T}, \lX_1^{\star T}, \ldots, \lX_{q-1}^{\star T}, \lX_{q+1}^{\star T}, \ldots, \lX_K^{\star T}]^T.
                        \end{equation}
                        It then updates its own block of the filter as 
                        \begin{equation}
                                \label{eq:update_rule_q}
                                \X_q^{i+1}\leftarrow X_q^{\star},
                        \end{equation}
                        and transmits each $\lX_k$ to its corresponding node, which in turn updates its local filter as 
                        \begin{equation}
                                \label{eq:update_rule_k}
                                \X_k^{i+1}\leftarrow \X_k^i\lX_k^{\star}.
                        \end{equation}
        \end{description}
        The three steps are illustrated in Figure \ref{fig:dasf_steps} and the full procedure is described by Algorithm \ref{alg:fc}.
        \SetKwFor{Node}{At node}{}{}
\SetKwFor{Loop}{loop}{}{}
\begin{algorithm}[t]
        \Begin{
                $i\gets 0$, $q\gets 1$, Randomly initialize $\X^0$\\
                \Loop{}{
                        \For{$k\in\mc{K}\smallsetminus\{q\}$}{
                                \Node{$k$} {
                                        Collect a new batch of $N$ samples of
                                        $\y_k(t)$ and send the compressed
                                        samples
                                        $\cy_k^i(t)=\X_k^{iT}\y_k(t)$
                                        along with $\Al^i_k=\X^{iT}_kA_k$, $\cB^i_k=\X^{iT}_k\cB_k$ and
                                        $\cC_k^i = \X^{iT} \C_k$ to node $q$.
                                }
                        }
                        \Node{$q$}{
                                Obtain $\lX^{\star}$ by solving and selecting any solution of
                                $\mathbb{P}(\ly^i,\lB^i, \tilde{\mc{D}}^i, \lA^i)$ .\\                                Extract $\X_q^\star$ and the $\lX_k^{\star}$'s
                                from ${\lX}^{\star}$ according to \eqref{eq:block_struct_local_filter}.\\
                                $\X_q^{i+1}\gets \X_q^{\star}$\\
                                \For{$k\in\mathcal{K}\smallsetminus\{q\}$}{

                                        Send $\lX_k^{\star}$ to node $k$.\\
                                        \Node{$k$}{
                                                $\X^{i+1}_k \gets
                                                \X^{i}_k\lX_k^{\star}$\\
                                        }
                                }

                        }
                        $i\gets i+1$, $q\gets (q + 1) \mod K$

                }
        }
        \caption{(NS-)DASF algorithm in fully-connected networks. Implementation available in \cite{dasf_toolbox}}
        \label{alg:fc}
\end{algorithm}

        As a different batch of $N$ samples is used at each iteration, the DASF algorithm produces the filtered signal 
        \begin{equation}
                \z^{i+1}\triangleq\X^{i+1T}\y=\sum_k \X_k^{i+1T}\y_k=\lX^{\star T}\ly^{i} 
\end{equation}
for each $N$-samples block, while at the same time improving the estimate of the optimal spatial filter $\X^\star$, such that each new block of the filtered signal is closer to the desired filtered signal (under the stationarity assumption). In other words, DASF acts as a time-recursive block-adaptive filter that continually adapts itself over time to the (possibly changing) statistics of $\y(t)$ \cite{musluoglu2022unified_p1}. Note that the last equality implies that each updating node can locally produce the new estimate $\z^{i+1}$ without additional data exchange.

        Provided that the local solution is obtained using an exact solver, the DASF algorithm and its extension to arbitrary network topologies have been shown to converge to a stationary point of the original problem (under some technical conditions, details omitted) \cite{musluoglu2022unified_p2, hovine2024distributed}.

\section{DASF with Inexact Local Solvers}
\label{sec:inexactalg}


The algorithm described in the previous section might in practice be difficult to implement, as the computational cost to find a global minimizer of $\mathbb{P}(\ly^i,\lB^i, \tilde{\mc{D}}^i, \lA^i)$ at the updating node might be prohibitive, requiring either too much computations or too much memory. In addition, an exact global solver for the problem of interest might not even exist. For example, many solvers for non-convex problems often produce a stationary point or local optimum rather than a global optimum. In this section, we show how the local solution step can be significantly relaxed to allow for common iterative solvers, not necessarily converging to optimal values, while still guaranteeing convergence to a stationnary point of the centralized problem \eqref{eq:generic_problem} or \eqref{eq:generic_problem_ns}.

In what follows, we express \eqref{eq:generic_problem}-\eqref{eq:generic_problem_ns} in the equivalent form 
\begin{equation}
    \label{eq:compact_problem}
    \min_{\X} L(\X)
\end{equation}
where 
\begin{equation}
    \label{eq:sum_cost}
    L(\X)\triangleq \varphi(\X^T \bm{y}(t), \X^T\B) + \gamma(\X^T\A) +\delta_{\mc{X}}(\X)
\end{equation}
with $\delta_{\mc{X}}(\cdot)$ the indicator function of some set $\mc{X}$ described by the equality and inequality constraints, i.e., $\delta_{\mc{X}}(\X) = 0$ if $\X\in\mc{X}$ and $\delta_{\mc{X}}(\X) = \infty$ otherwise\footnote{$L$ is thus not a real-valued function, but takes its values in the \emph{extended} real number line $\overline{\R}\triangleq\R\cup\{-\infty,\infty \}$.}. In the case of the generic problem \eqref{eq:generic_problem} we have
\begin{equation}
    \begin{split}
        \mc{X}\triangleq \{ 
            \X \;|\; &\forall j\in\mc{J}_I,\; \eta_{j}(\X^T\y(t), \X^T \C_j ) \leq 0,\\
                     &\forall j\in\mc{J}_E, \;\eta_{j}(\X^T\y(t), \X^T \C_j ) = 0\},
        \end{split}
    \end{equation}
    and in the non-smooth case \eqref{eq:generic_problem_ns},
\begin{equation}
    \begin{split}
        \mc{X}\triangleq \{ 
            \X \;|\; \forall k,\; &\forall j\in\mc{J}_I^k,\; \eta_{j}(\X^T_k\y_k(t), \X_k^T \C_{j,k} ) \leq 0,\\
                                  &\forall j\in\mc{J}_E^k, \;\eta_{j}(\X^T_k\y_k(t), \X^T_k \C_{j,k} ) = 0\}.
        \end{split}
    \end{equation}

    From the update rules \eqref{eq:update_rule_q} and \eqref{eq:update_rule_k}, one could show that solving $\mathbb{P}(\ly^i,\lB^i, \lC^i, \lA^i)$ in the space where $\lX$ lives is equivalent to solving $\mathbb{P}(\y,\B, \C, \A)$ in the original space of $\X$ with the additional constraints that 
    \begin{equation}
        \label{eq:param_constraint}
        \forall k\neq q,\;\exists \lX_k\in\mathbb{R}^{Q\times Q}:\;\X_k=\X^i_k \lX_k,
    \end{equation}
    or equivalently
    \begin{equation}
        \forall k\neq q,\; \X_k \in \range \X_k^i,
    \end{equation}
    with $\range$ denoting the range or column space operator. 
    To each local $\lX$ is associated a unique $\X$ in the original problem space, a fact that we can express as the linear relationship
    \begin{equation}
        \label{eq:local_to_global}
        \X = C^i_q \lX 
    \end{equation} 
    where $C_q^i\in\R^{M\times (M_q+(K-1)Q)}$ is a particular block-matrix performing the required permutations and multiplications by the blocks $\X^i_k$ of the blocks $\lX_k$ of $\lX$ to obtain the transformation uniquely described by \eqref{eq:param_constraint}.  
    More specifically, we define
    \label{apx:struct_c}
    \begin{equation}
        \begin{split}
            \Theta_{-q}(X) &\triangleq \blkd\left(\X_1, \dots,\X_{q-1}\right)\textrm { and, }\\
            \Theta_{+q}(X) &\triangleq
           \blkd\left(\X_{q+1}, \dots,\X_{K}\right)
        \end{split}
    \end{equation}
    from which we define the linear matrix-to-matrix map
    \begin{equation}
        \label{eq:def_Cq}
            C_q(\X) = \left[\def\arraystretch{1.5}\begin{array}{c | c |c }
                O & \Theta_{-q}(\X) & O\\\hline
                {I}_{M_q} & O & O\\\hline
                O & O & \Theta_{+q}(\X)
        \end{array}\right]
    \end{equation}
    where the $O$'s denote all-zero matrices of approriate dimensions. This allows us to formally express
    \begin{equation}
        \label{eq:def_Cqi}
        C^i_q \triangleq C_q(\X^i),
    \end{equation}
    leading in particular to 
    \begin{equation}
        \label{eq:first_it}
        \X^i = C^i_q [\X^{iT}_q, I_Q, \dots, I_Q]^T.
    \end{equation}
    
    It is important to note that there is a linear relationship between the \emph{local} variable $\lX$ and the corresponding point $\X$ in the \emph{global} space, and that this relationship depends on the current iteration $i$ (via the previous estimate of the solution $\X^i$) and the current updating node $q$. This relationship \emph{itself} linearly depends on $X^i$ via \eqref{eq:def_Cqi}. The linearity, and hence continuity of the map $C_q$ is one of the key properties used to show convergence, and replacing $C_q$ by any other continuous map would lead to the same convergence result (altough optimality would not anymore be guaranteed). Although we haven't described the algorithm for arbitrary network topologies, a similar linear map as in \eqref{eq:def_Cq} can be defined for that case (see \cite{musluoglu2022unified_p1, hovine2024distributed}), which implies that the convergence proof in Section \ref{sec:convergence} will generalize to these arbitrary topologies as well.

    Using the notation \eqref{eq:sum_cost} and \eqref{eq:local_to_global}, we can  now compactly express the local problem at iteration $i$ as
    \begin{equation}
        \label{eq:loc_prob_compact}
        \min_{\lX} L(C_q^i\lX).
    \end{equation}
    We note that --by construction-- \eqref{eq:loc_prob_compact} has the same solutions as the local problem $\mathbb{P}(\ly^i,\lB^i, \tilde{\mc{D}}^i, \lA^i)$. While we need \eqref{eq:loc_prob_compact} in the proof, in practice it is better to directly solve $\mathbb{P}(\ly^i,\lB^i, \tilde{\mc{D}}^i, \lA^i)$ instead of \eqref{eq:loc_prob_compact} as the former directly inherits the same structure as the original centralized problem \eqref{eq:generic_problem}-\eqref{eq:generic_problem_ns}, and hence the original solver for \eqref{eq:generic_problem}-\eqref{eq:generic_problem_ns} can be re-used.

    \subsection{Useful concepts from variational analysis}
    We introduce here a few definitions from variational analysis that are essential to our next developments. This section is purposedly kept short, and we refer the reader to the reference works of Clarke \cite{clarke1990optimization} and Rockafellar \cite{rockafellar2009variational} for an in-depth treatment of the subject.

    \paragraph{Subgradients}
    Following \cite[Definition 8.3]{rockafellar2009variational}, we define the set of regular subgradients of $L$ at $\accX$ (where $L(\accX)$ is finite) as
    \begin{multline}
        \hat{\partial}L(\accX)\triangleq \Big\{V\in \R^{M\times Q} \;|\;\\\liminf_{\X\to\bar{\X} \atop \X\neq\bar{\X}} \frac{L(\X)-L(\accX)-\inner{V}{\X-\accX}_F}{\norm{\X-\accX}}\geq 0 \Big\},
    \end{multline}
    and the set of general subgradients of $L$ at $\accX$ as\footnote{\eqref{eq:def_subgradient} can be read as ``the set of elements $V$ such that there exist sequences $(\X^j)_j$ converging to $\accX$, $(L(\X^j))_j$ converging to $L(\accX)$ and $(V^j)_j$ converging to $V$, such that $V^j$ is a regular subgratient of $L$ at $X^j$''.} 
    \begin{multline}
        \label{eq:def_subgradient}
        \partial L(\accX)\triangleq \{V\in\R^{M\times Q}\;|\; \\
        \exists \X^j\to\accX, L(\X^j)\to L(\accX), V^j\in\hat{\partial}{L}(\X^j)\to V\},
    \end{multline}
    which we refer to simply as ``subgradients'' from hereon.
    This definition of a subgradient does not require any structural property such as convexity, smoothness, or even continuity from $L$, and extends the usual definition of subgradients from convex analysis. Indeed, \eqref{eq:def_subgradient} generalizes the familiar notion of subgradient of a convex function. Note  again that the developments in \cite{hovine2024distributed} assumed convexity for the non-smooth part of the objective in \eqref{eq:generic_problem_ns}, which is not required here.
    \paragraph{Local Optimality}
    If $L:\R^{M\times Q}\mapsto \overline{\R}$ is lower-semicontinuous\footnote{A function $L$ is lower-semicontinuous if for any $\alpha$, its level sets $\{\X\;|\; L(\X) \leq \alpha\}$ are closed.} and $\X^\star$ is a local minimum of $L$, then \cite[Theorem 8.15]{rockafellar2009variational}
    \begin{equation}
        \label{eq:optimality_cond}
        0\in\partial L(\X^\star).
    \end{equation}
    Any point $\X^\star$ such that $L(\X^\star)$ is finite and satisfying \eqref{eq:optimality_cond} is called a \emph{stationary point} of $L$. 
    This definition reduces to the usual ``KKT'' optimality conditions \cite{karush1939minima, kuhnnonlinear} in the smooth case. Indeed, if 
    \begin{equation}
        L(\X)\triangleq f(\X) + \delta_{\mc{X}}(\X)
    \end{equation}
    where $f$ is a smooth function, and $\mc{X}$ is the closed set $\{\X\;|\; g(\X)\leq0,\; h(\X)= 0\}$ where $h$ and $g$ are smooth functions, then we have at any point $\X$ where $L(\X)$ is finite (and thus $\X\in\mc{X}$) \cite[Proposition 4.58]{Royset2021}
    \begin{align}
        \partial L(\X) &= \nabla f(\X) + \partial\delta_{\mc{X}}(\X).
    \end{align}
    If $\nabla g(\X)$ and $\nabla h(\X)$ are linearly independent (this is stricter than required, but is only meant for illustrative purposes), then \cite[Example 4.49]{Royset2021}
    \begin{multline}
        \partial \delta_{\mc{X}}(\X) = \{\lambda_g \nabla g(\X) + \lambda_h \nabla h(\X) \;|\; \\\lambda_h \in\R,\; \lambda_g = 0 \textrm{ if } g(\X) < 0,\textrm{ else } \lambda_g \geq 0\}.
    \end{multline}
    Defining the \emph{Lagrangian}
    \begin{equation}
        \mathfrak{L}(\X,\lambda_g, \lambda_h)\triangleq f(\X) + \lambda_g g(\X) + \lambda_h h(\X),
    \end{equation}
    we have for a feasible point $\X$ the equivalence
    \begin{multline}
        0\in\partial L(\X) \Leftrightarrow\\ \exists \lambda_g = 0 \textrm{ if } g(\X) < 0,\textrm{ else } \lambda_g \geq 0, \lambda_h:\\ \nabla_{\X} \mathfrak{L}(\X,\lambda_g, \lambda_h)=0.
    \end{multline}
    This equivalence allows us to offer a unifying treatment of the smooth and non-smooth cases via the compact problem formulation \eqref{eq:compact_problem} (or \eqref{eq:loc_prob_compact} for the local problem at the updating node).

    \subsection{DASF algorithm with inexact solver}

    In what follows, we define the local objective at iteration $i$
    \begin{equation}
        \label{eq:local_objective}
        \tL^i_q: \lX\mapsto  L(C_q^i\lX),
    \end{equation}
    which we use to construct the following definition of an \emph{inexact iterative solver}:
    \begin{defbox}
    \begin{definition}[Inexact iterative solver]
        \label{def:inexact_solver}
        Given some current estimate $\X^i$, an inexact iterative local solver running on node $q$ produces a sequence of iterates
        $(\lX^{i,j})_{j\in\mathbb{N}}$ such that for each iteration $i$
        \begin{enumerate}
            \item  \begin{equation}\label{eq:first_prop_ls}\tag{C1} {\lX}^{i,0}=[\X^{iT}_q, {I}_Q,\dots, {I}_Q]^T,\end{equation}
            \item there is some $R^i>0$ such that for any $j$, 
                \begin{multline}
                    \label{eq:second_prop_ls}
                    \tag{C2}
                    \tL_q^i(\lX^{i,j})-\tL_q^i(\lX^{i,j+1})\\\geq
                    R^i\norm{\lX^{i,j}-\lX^{i,j+1}}^2_F,
                \end{multline}
            \item there is some $c^i>0$ such that for any $j$ there is some $W\in\partial \tL_q^i(\lX^{i,j+1})$ such that
                \begin{equation}
                    \label{eq:third_prop_ls}
                    \tag{C3}
                    c^i\norm{\lX^{i,j}-\lX^{i,j+1}}_F\geq \norm{W}_F.
                \end{equation}
            \item In addition, the sequences of constants $R^i$ and $c^i$ must be such that
                \begin{gather}
                    \label{eq:fourth_prop_ls}
                    \tag{C4}
                    \liminf_{i\to\infty} R^i > 0\quad\textrm{and}\quad\limsup_{i\to\infty} c^i <\infty.
                \end{gather}
        \end{enumerate}
    \end{definition}
\end{defbox}
Flowing directly from \eqref{eq:first_it}, condition \eqref{eq:first_prop_ls} ensures consistency with the current global estimate of the solution $\X^i$, and can be enforced as long as the solver accepts an arbitrarily chosen starting point. Condition \eqref{eq:second_prop_ls} is often referred to as a \emph{sufficient decrease} condition \cite{beck2017first, themelis2018proximal}. It ensures that an update of the optimization variable results in a corresponding decrease of the cost, and produces a feasible point (as, by the definition of the  indicator function $\delta_{\mc{X}}$, the cost would otherwise have infinite value in the case of a non-feasible point). Condition \eqref{eq:third_prop_ls} ensures that progressively smaller updates of the optimization variable result in a higher degree of stationarity (the distance of  the set $\partial \tL(\lX^{j+1})$ to 0 plays here the role of optimality criterion, and summarizes how ``close'' we are to a stationary point). This third condition is crucial, as without it, the procedure could wander away from a stationary point during the first few steps, but still eventually converge to some stationary point, in which case a single step of the solver would not be guaranteed to improve on the current iterate. The last requirements on $R^i$ and $c^i$ in condition \eqref{eq:fourth_prop_ls} ensure that those constants do not become arbitrarily small or large as the algorithm progresses, as that would render the two previous conditions meaningless. Those constants are usually related to a parameter of the solver that can be controlled, a lower or upper bound can therefore be enforced explicitly. 

Although these requirements might seem restrictive, they have been shown to hold for many commonly used descent methods \cite{attouch2013convergence}, if some additional technical conditions are satisfied. Here are some common examples, some of which are further elaborated on in Appendix \ref{apx:common_solvers}:
    \begin{description}
        \item[Gradient Descent] (see  Appendix \ref{apx:common_solvers}, and \cite{nocedal1999numerical,attouch2013convergence}) provided that the problem is unconstrained and the objective has Lipschitz gradients.
        \item[Projected/Proximal Gradient Descent] (see \cite{beck2017first, attouch2013convergence}) provided that the smooth part of the objective has Lipschitz gradients.
        \item[Newton's Method] (see Appendix \ref{apx:common_solvers}) provided that the Hessian is positive definite and bounded, and that the objective has Lipschitz gradients.
        \item[Regularized Exact Solver] (see Appendix \ref{apx:common_solvers}) By definition, an exact solver obtains a solution in a single iteration and the properties of Definition \ref{def:inexact_solver} are almost satisfied. If there are multiple solutions, some mechanism is required to ensure that the solver does not ``jump'' from one solution to the next, by for example explicitly penalizing the distance from the starting point. 
        \item[Power Method] (see Appendix \ref{apx:common_solvers}) An intuitive argument is that the power method can be expressed as an instance of projected gradient descent applied to a specific constrained quadratic problem. 
    \end{description}
    \begin{remark}[Comparison with the original DASF algorithm]
        An important, mostly theoritical, difference with the original DASF and NS-DASF algorithms, is the removal of the \emph{well-posedness} assumption on the global problems. Indeed, it was assumed in those algorithms that the solution set of the optimization problem varied ``smoothly'' with regards to the problem's parameters ($\y(t),\B, \mc{D}, \A$). This assumption can be hard to check, and even not hold in some cases. Property \eqref{eq:second_prop_ls}, which is not satisfied by an exact solver, allows us to remove this assumption. See \cite{musluoglu2022unified_p1, musluoglu2022unified_p2, hovine2024distributed} for details on the assumption.
    \end{remark}

    The inexact version of the  DASF Algorithm simply consists in selecting $\lX^{\star}\triangleq \lX^{i,n_i}$ as the solution of an inexact solver applied to the local problem at node $q$  at iteration $i$ of the DASF algorithm, which we denote 
    \begin{equation}
        \label{eq:local_problem}
        \min_{\lX} \tL^i_q(\lX),
    \end{equation}
    where the number of iterations $n_i>0$ of the inexact solver can be chosen arbitrarily. 

    \subsection{Convergence and Optimality}
    \label{sec:convergence}

    We first describe a general and easily satisfied assumption on $L$.
    \begin{assbox}
    \begin{assumption}
        \label{ass:main}
        $L$ is continuous over its domain\footnote{The domain of a function $L$ is the set of points for which  $L(\X)<\infty$.}, and it has compact sublevel-sets.
    \end{assumption}
\end{assbox}
    A sufficient condition is to require that $\varphi$ and the $\eta_j$ are continuous, $\gamma$ is continuous over its domain, and either $\gamma$ and $\varphi$ or at least one of the $\eta_j$ have bounded sublevel sets. 

    We show the convergence of this scheme by showing that (i) $L(\X^i)$ decreases monotonically, and (ii)  every accumulation point of $(\X^i)_{i\in\mathbb{N}}$ is a stationary point (as defined by \eqref{eq:optimality_cond}) of the local problem associated with each node $q$ \eqref{eq:local_problem} . We then leverage the main result of \cite{musluoglu2022unified_p2, hovine2024distributed} to show optimality with regards to the global optimization problem \eqref{eq:compact_problem}.

    \begin{lemma}[Monotonic decrease of the objective]
        \label{lem:mon_decrease}
        Let $(\X^i)_{i\in\mathbb{N}}$ be a sequence generated by the DASF Algorithm
        with inexact local solver. Then $(L(\X^i))_{i\in\mathbb{N}}$ is a  monotonically decreasing convergent sequence.
    \end{lemma}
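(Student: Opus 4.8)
The plan is to reduce the monotonic decrease across one outer DASF iteration to the sufficient-decrease property \eqref{eq:second_prop_ls} of the inexact solver, by identifying the global cost at two consecutive outer iterates with the local objective $\tL_q^i$ evaluated at the first and last inner iterate of the solver. First I would observe that the initialization condition \eqref{eq:first_prop_ls} together with the identity \eqref{eq:first_it} gives $C_q^i\lX^{i,0} = C_q^i[\X^{iT}_q, I_Q, \dots, I_Q]^T = \X^i$, and therefore $\tL_q^i(\lX^{i,0}) = L(C_q^i\lX^{i,0}) = L(\X^i)$. On the other hand, the parameter-update rules \eqref{eq:update_rule_q}--\eqref{eq:update_rule_k} are, by construction, exactly encoded by the linear map $C_q^i$ (this is precisely the content of \eqref{eq:local_to_global}), so that the new outer iterate satisfies $\X^{i+1} = C_q^i\lX^\star = C_q^i\lX^{i,n_i}$, and hence $L(\X^{i+1}) = \tL_q^i(\lX^{i,n_i})$. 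The monotonic decrease of $L$ across one outer iteration is thus equivalent to the decrease of the local objective from the first to the last inner iterate of the solver.

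The second step is to obtain this inner decrease by telescoping \eqref{eq:second_prop_ls}. Summing over $j = 0, \dots, n_i-1$ yields
\[
\tL_q^i(\lX^{i,0}) - \tL_q^i(\lX^{i,n_i}) = \sum_{j=0}^{n_i-1}\left(\tL_q^i(\lX^{i,j}) - \tL_q^i(\lX^{i,j+1})\right) \geq R^i\sum_{j=0}^{n_i-1}\norm{\lX^{i,j}-\lX^{i,j+1}}_F^2 \geq 0,
\]
where the last inequality uses $R^i>0$. Combined with the two identities above, this gives $L(\X^i) \geq L(\X^{i+1})$ for every $i$, which is the claimed monotonicity. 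As a useful by-product, this also shows that every outer iterate is feasible: an infeasible $\X^{i+1}$ would give $L(\X^{i+1}) = +\infty$, contradicting the finite bound $L(\X^{i+1}) \leq L(\X^i) \leq L(\X^0) < \infty$ obtained by induction from the finiteness of $L(\X^0)$.

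Finally, to upgrade monotonicity to convergence I would invoke the monotone convergence theorem, which only requires a lower bound. Since $(L(\X^i))_i$ is nonincreasing, every iterate lies in the sublevel set $\{\X \mid L(\X) \leq L(\X^0)\}$, which is compact by Assumption \ref{ass:main}; as $L$ is continuous on its domain, which contains this sublevel set, it attains a finite minimum there, furnishing the required lower bound. Hence $(L(\X^i))_i$ is nonincreasing and bounded below, and therefore convergent.

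I do not anticipate a serious technical obstacle; the only genuinely non-routine step is the bookkeeping in the first paragraph, namely recognizing that the solver initialization \eqref{eq:first_prop_ls} makes $\lX^{i,0}$ correspond to $\X^i$ through $C_q^i$, and that the DASF parameter-update step is precisely multiplication by $C_q^i$, so that one DASF outer iteration is nothing but the net effect of the inner solver on the local objective $\tL_q^i$. Once this correspondence is established, monotonicity is an immediate telescoping of \eqref{eq:second_prop_ls} and convergence follows from the compactness of sublevel sets. Note that conditions \eqref{eq:third_prop_ls}--\eqref{eq:fourth_prop_ls} are not needed for this lemma and will only enter the subsequent stationarity argument.
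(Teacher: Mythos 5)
Your proof is correct and follows essentially the same route as the paper: identify $L(\X^i)=\tL_q^i(\lX^{i,0})$ via \eqref{eq:first_prop_ls} and $L(\X^{i+1})=\tL_q^i(\lX^{i,n_i})$ via the update rules, telescope the sufficient-decrease condition \eqref{eq:second_prop_ls} over the inner iterations, and conclude convergence from the lower bound furnished by Assumption \ref{ass:main}. Your summation indexing ($j=0,\dots,n_i-1$) and the feasibility remark are slightly more careful than the paper's, but the argument is the same.
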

    \begin{proof}
        Since by the definition \eqref{eq:local_objective} $\tL_q^i(\lX) = L(C^i_q\lX)$, from the second property of the local solver \eqref{eq:second_prop_ls}, we have 
        \begin{equation}
            L(C_q^i\lX^{i,j})-L(C_q^i\lX^{i,j+1})\geq
            R^i\norm{\lX^{i,j}-\lX^{i,j+1}}^2_F \geq 0.
        \end{equation}
        Summing from $j=0$ to $j=n_i$ and telescoping the left-hand side yields
        \begin{equation}
            L(\X^{i})-L(\X^{i+1})\geq 0,
        \end{equation}
        where we used \eqref{eq:first_prop_ls} to substitute $\X^i=C_q^i\lX^{i,0}$ and the definition of the inexact algorithm to substitute $\X^{i+1} = C_q^i\lX^{i,n_i}$, hence
        proving the monotonic decrease. 

        As $L$ is continuous over its domain with compact sub-level sets, it has a minimum value, and the sequence $(L(\X^i))_{i\in\mathbb{N}}$ is therefore bounded-below \cite{Charalambos2013} and must converge \cite{rudin1976principles}.
    \end{proof}

    \begin{lemma}[Convergence of the Sequence of Residuals]
        \label{lem:residuals}
        Let $(\X^i)_{i\in\mathbb{N}}$ be a sequence generated by the DASF Algorithm
        with inexact local solver, and let $(\lX^{i,j})_{j\in\mathbb{N}}$ be the subsequence generated by the local solver at each iteration $i$. Then ,
        \begin{equation}
            \label{eq:cons_lem_res}
            \lim_{i\to\infty}\norm{\lX^{i,j}-\lX^{i,j+1}}^2_F = 0\quad \forall j
        \end{equation}
        and
        \begin{equation}
            \label{eq:lem_res_p2}
            \lim_{i\to\infty}\norm{\X^i-\X^{i+1}}_F=0.
        \end{equation}
    \end{lemma}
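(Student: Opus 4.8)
The plan is to derive both limits from the fact, established in Lemma \ref{lem:mon_decrease}, that $(L(\X^i))_i$ converges, so that the per-iteration decrease $L(\X^i)-L(\X^{i+1})$ tends to $0$. Every residual will then be squeezed between $0$ and a multiple of this vanishing decrease, using the sufficient-decrease property \eqref{eq:second_prop_ls} together with the uniform lower bound $\liminf_i R^i>0$ from \eqref{eq:fourth_prop_ls}.

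For the first claim \eqref{eq:cons_lem_res}, I would reuse the telescoping computation from the proof of Lemma \ref{lem:mon_decrease}. Each inner step satisfies $\tL_q^i(\lX^{i,j})-\tL_q^i(\lX^{i,j+1})\geq R^i\norm{\lX^{i,j}-\lX^{i,j+1}}_F^2\geq 0$, and these nonnegative terms sum over $j=0,\dots,n_i-1$ to $\tL_q^i(\lX^{i,0})-\tL_q^i(\lX^{i,n_i})=L(\X^i)-L(\X^{i+1})$ by \eqref{eq:first_prop_ls} and $\X^{i+1}=C_q^i\lX^{i,n_i}$. Hence any single term is bounded by the total, giving, for every index $j$ reached by the solver (i.e. $j<n_i$),
\begin{equation}
R^i\norm{\lX^{i,j}-\lX^{i,j+1}}_F^2\leq L(\X^i)-L(\X^{i+1}).
\end{equation}
By \eqref{eq:fourth_prop_ls} there is $R_{\min}>0$ with $R^i\geq R_{\min}$ for all large $i$, so dividing and letting $i\to\infty$ forces $\norm{\lX^{i,j}-\lX^{i,j+1}}_F^2\to 0$. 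In particular the final residual $\norm{\lX^{i,n_i-1}-\lX^{i,n_i}}_F$ vanishes, which is the quantity that \eqref{eq:third_prop_ls} will later convert into a vanishing subgradient.

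For the second claim \eqref{eq:lem_res_p2}, I would write $\X^i-\X^{i+1}=C_q^i(\lX^{i,0}-\lX^{i,n_i})$ and bound $\norm{\X^i-\X^{i+1}}_F\leq\norm{C_q^i}\,\norm{\lX^{i,0}-\lX^{i,n_i}}_F$, where $\norm{C_q^i}$ is the operator norm induced by the Frobenius norm. This operator norm is controlled uniformly: by Assumption \ref{ass:main} the sublevel set $\{L\leq L(\X^0)\}$ is compact and, by monotonicity, contains every $\X^i$, so the blocks $\X_k^i$ appearing in \eqref{eq:def_Cq} are uniformly bounded and $\norm{C_q^i}\leq\bar C$ for some constant $\bar C$. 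For the displacement I would use the triangle inequality followed by Cauchy--Schwarz,
\begin{multline}
\norm{\lX^{i,0}-\lX^{i,n_i}}_F^2\leq n_i\sum_{j=0}^{n_i-1}\norm{\lX^{i,j}-\lX^{i,j+1}}_F^2\\\leq\frac{n_i}{R^i}\big(L(\X^i)-L(\X^{i+1})\big),
\end{multline}
so that $\norm{\X^i-\X^{i+1}}_F^2\leq\bar C^2\,\tfrac{n_i}{R^i}\big(L(\X^i)-L(\X^{i+1})\big)$, which tends to $0$ provided the number of inner iterations stays bounded, $\sup_i n_i<\infty$ (the ``few iterations'' regime of interest).

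The main obstacle is the second claim rather than the first. Two points require care: establishing the uniform operator-norm bound on $C_q^i$, which is precisely where the compactness in Assumption \ref{ass:main} is needed to prevent the iteration-dependent map $C_q^i$ from blowing up; and the factor $n_i$ that arises from bounding the path length of the inner solver, which is why the displacement is controlled only when $n_i$ remains bounded. This is also the step where inexactness is essential: an exact solver would violate \eqref{eq:second_prop_ls} (it can jump an arbitrary distance for an arbitrarily small decrease in $L$), so the squared-residual control underlying both bounds would be lost.
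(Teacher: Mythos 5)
Your proof is correct and follows essentially the same route as the paper: telescoping the sufficient-decrease inequality \eqref{eq:second_prop_ls} so that each inner residual is squeezed by the vanishing total decrease $L(\X^i)-L(\X^{i+1})$ (using $\liminf_i R^i>0$), and then controlling $\norm{\X^i-\X^{i+1}}_F$ through a uniform bound on $C_q^i$ obtained from compactness of the sublevel set and continuity of $C_q(\cdot)$. The only substantive difference is that your Cauchy--Schwarz step makes the factor $n_i$ in the path-length bound explicit and hence requires $\sup_i n_i<\infty$; the paper's proof instead sums the $n_i$ per-step Frobenius norms and passes to the limit without commenting on the growth of $n_i$, so your added hypothesis is a legitimate tightening of the argument rather than a gap in your own.
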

    \begin{proof}
        Using the fact that 
        \begin{equation}
            \label{X0ni}
            \X^i=C^i_q\lX^{i,0}\textrm{ and }\X^{i+1}=C^i_q\lX^{i,n_i},
    \end{equation}
         we have 
        \begin{equation}
            L(\X^i)-L(\X^{i+1}) = \sum_{j=0}^{j=n_i} L(C_q^i\lX^{i,j})-L(C_q^i\lX^{i,j+1}),
        \end{equation}
        From the second property of the local solver \eqref{eq:second_prop_ls}, for any $i$ and $j\leq n_i$
        \begin{equation}
            L(C_q^i\lX^{i,j})-L(C_q^i\lX^{i,j+1})\geq 0,
        \end{equation}
        it must be that, for any $i$ and $j\leq n_i$
        \begin{equation}
            L(\X^i)-L(\X^{i+1})\geq L(C_q^i\lX^{i,j})-L(C_q^i\lX^{i,j+1}),
        \end{equation}
        and hence from \eqref{eq:second_prop_ls} again
        \begin{equation}
            \label{eq:upr_bound_x_step}
            L(\X^i)-L(\X^{i+1})\geq R^i\norm{\lX^{i,j}-\lX^{i,j+1}}^2_F.
        \end{equation}
        As a consequence of Lemma \ref{lem:mon_decrease}, the convergence of $(L(\X^i))_{i\in\mathbb{N}}$ in combination with \eqref{eq:upr_bound_x_step} implies that, for any $0\leq j< n_i$,
        \begin{equation}
            \label{eq:conv_local_steps}
            \lim_{i\to\infty}  L(\X^i)-L(\X^{i+1}) = \lim_{i\to\infty}R^i\norm{\lX^{i,j}-\lX^{i,j+1}}^2_F = 0.
        \end{equation}
        From property \eqref{eq:fourth_prop_ls} of the local solver, the sequence $R^i$ is eventually lower-bounded, and therefore 
        \begin{equation}
            \label{eq:conv_local_steps}
            \lim_{i\to\infty}\norm{\lX^{i,j}-\lX^{i,j+1}}^2_F = 0\quad\forall j,
        \end{equation}
        proving the first statement.

        From \eqref{X0ni} and the triangle inequality, we have
        \begin{multline}
            \label{eq:upper_bound_res}
            \norm{\X^i-\X^{i+1}}_F \leq 
            \sum_{j=0}^{n_i-1} \norm{C_q^i\lX^{i,j}-C_q^i\lX^{i,j+1}}_F\leq \\
            \norm{C_q^i}_F\sum_{j=0}^{n_i-1} \norm{\lX^{i,j}-\lX^{i,j+1}}_F,
        \end{multline}
        where the second inequality relies on the sub-multiplicative property of the Frobenius norm. As $C_q^i=C_q(\X^i)$, and as the sub-level sets of $L(\X^0)$ are compact, $\norm{X^i}_F$ is bounded and by the continuity of the map $C_q(\X)$, so is $\norm{C_q^i}$.  \eqref{eq:conv_local_steps} and \eqref{eq:upper_bound_res} therefore imply that 
        \begin{equation}
            \lim_{i\to\infty}\norm{\X^i-\X^{i+1}}_F=0,
        \end{equation}
        completing the proof.
    \end{proof}

    \begin{lemma}[Node-specific Optimality of Accumulation Points]
        \label{lem:local_optimality}
        Let $(\X^i)_{i\in\mathbb{N}}$ be a sequence generated by the DASF Algorithm
        with inexact local solver. Then, every accumulation point $\lbar{\X}$ of $(\X^i)_{i\in\mathbb{N}}$ is a stationary point of the local problem \eqref{eq:local_problem} for any updating node $q$. 
    \end{lemma}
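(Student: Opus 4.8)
The plan is to show that, for each fixed node $q\in\mc{K}$, the accumulation point $\accX$ is a stationary point of the local problem \eqref{eq:local_problem} centered at $\accX$, i.e. of $\tL_q(\lX)\triangleq L(C_q(\accX)\lX)$, at the point $\lbar{\lX}\triangleq[\accX_q^T,I_Q,\dots,I_Q]^T$ (which satisfies $C_q(\accX)\lbar{\lX}=\accX$ by the identity \eqref{eq:first_it}); concretely, that $0\in\partial\tL_q(\lbar{\lX})$. First I would fix an arbitrary target node $q$ and reduce to a well-chosen subsequence. Let $(\X^{i_\ell})_\ell$ converge to $\accX$. Since $\lim_i\norm{\X^i-\X^{i+1}}_F=0$ by \eqref{eq:lem_res_p2}, for every fixed shift $s$ one also has $\X^{i_\ell+s}\to\accX$ (telescoping a fixed finite number of vanishing increments). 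Because the updating node cycles deterministically through all $K$ nodes, within any window of $K$ consecutive iterations node $q$ updates exactly once; hence there is a shift $s\in\{0,\dots,K-1\}$ (constant after passing to a further subsequence) for which $q$ is the updating node at iteration $i_\ell+s$. Relabelling as $(i_m)_m$, I obtain $\X^{i_m}\to\accX$ with updating node $q$ at each $i_m$.

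Along $(i_m)_m$ I would assemble the convergence facts needed to take a limit of subgradients. By linearity, hence continuity, of $C_q$, we get $C_q^{i_m}=C_q(\X^{i_m})\to C_q(\accX)$. The initialization $\lX^{i_m,0}=[\X_q^{i_m T},I_Q,\dots,I_Q]^T$ from \eqref{eq:first_prop_ls} converges to $\lbar{\lX}$, and since $\norm{\lX^{i_m,0}-\lX^{i_m,1}}_F\to0$ by \eqref{eq:cons_lem_res}, so does $\lX^{i_m,1}\to\lbar{\lX}$. Thus $C_q^{i_m}\lX^{i_m,1}\to C_q(\accX)\lbar{\lX}=\accX$; as these are feasible points (finite $L$-value by \eqref{eq:second_prop_ls}) and $\accX$ lies in the closed domain of $L$, Assumption \ref{ass:main} gives $\tL^{i_m}_q(\lX^{i_m,1})=L(C_q^{i_m}\lX^{i_m,1})\to L(\accX)=\tL_q(\lbar{\lX})$. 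Applying condition \eqref{eq:third_prop_ls} with $j=0$ yields $W^{i_m}\in\partial\tL^{i_m}_q(\lX^{i_m,1})$ with $\norm{W^{i_m}}_F\leq c^{i_m}\norm{\lX^{i_m,0}-\lX^{i_m,1}}_F$; the boundedness of $c^{i_m}$ from \eqref{eq:fourth_prop_ls} together with \eqref{eq:cons_lem_res} then forces $W^{i_m}\to0$.

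At this point everything would be immediate if the functions $\tL^{i_m}_q$ were all equal to the target $\tL_q$: outer semicontinuity (closed graph) of the limiting subgradient, combined with $\lX^{i_m,1}\to\lbar{\lX}$, the value convergence above, and $W^{i_m}\to0$, would give $0\in\partial\tL_q(\lbar{\lX})$ directly. The crux is that the local objective itself drifts with $m$ through the iteration-dependent map $C_q^{i_m}$. To bridge this I would pass to the global space: via the chain rule for the composition $\tL^{i_m}_q=L\circ C_q^{i_m}$, write $W^{i_m}=(C_q^{i_m})^T V^{i_m}$ with $V^{i_m}\in\partial L(C_q^{i_m}\lX^{i_m,1})$, extract a convergent subsequence $V^{i_m}\to\lbar{V}$, invoke the closed graph of $\partial L$ at $\accX$ (using $C_q^{i_m}\lX^{i_m,1}\to\accX$ and the value convergence) to get $\lbar{V}\in\partial L(\accX)$, pass to the limit in $(C_q^{i_m})^T V^{i_m}=W^{i_m}\to0$ to obtain $C_q(\accX)^T\lbar{V}=0$, and finally read this back through the reverse chain-rule inclusion as $0\in\partial\tL_q(\lbar{\lX})$. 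Since $q$ was arbitrary, the claim then holds for every node.

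The main obstacle lies entirely in this last transfer. Two points need genuine care. First, because compression forces each $\range\X_k^i$ to be a strict subspace, the map $C_q^{i_m}$ is not surjective, so the subgradient chain rule through it is not automatic and must be justified by a constraint-qualification/regularity argument on $L$ (and, strictly, one must also track the horizon subgradients $\partial^{\infty} L$). Second, the multipliers $V^{i_m}$ need not be bounded when $\accX$ lies on the boundary of the feasible set, where $\partial L$ contains unbounded normal-cone directions, so extracting the limit $\lbar{V}$ is delicate and is precisely where the qualification condition, implied by Assumption \ref{ass:main} and the block structure of $L$, becomes essential; managing this simultaneous drift of the objective and limit of the multipliers is the technical heart of the argument.
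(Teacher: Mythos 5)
Your proposal follows essentially the same route as the paper's proof: pass to a subsequence along which the updating node is fixed and $\X^i,\X^{i+1}\to\accX$ (using Lemma \ref{lem:residuals}), use \eqref{eq:third_prop_ls}--\eqref{eq:fourth_prop_ls} together with \eqref{eq:cons_lem_res} to produce subgradients $W^i$ of the local objectives with $W^i\to 0$, transfer them to the global space via the chain rule $\partial\tL^i_q(\cdot)=C_q^{iT}\partial L(C_q^i\,\cdot)$, pass to the limit by outer-semicontinuity and continuity of $C_q(\cdot)$, and finish with the same shift argument over the $K$ nodes. The two obstacles you single out --- justifying the subgradient chain rule through the non-surjective linear map $C_q^i$ and controlling the multipliers $V^{i_m}$ in the limit --- are exactly the steps the paper dispatches by direct citation of the generalized chain rule and the outer-semicontinuity of the subgradient map (applied to the composite sets $C_q^{iT}\partial L(\X^{i+1})$ and the vanishing sequence $W^i$ rather than to the $V^i$ themselves), so your sketch is not missing anything the paper supplies; the only cosmetic difference is that you work with the first sub-iterate $\lX^{i,1}$ ($j=0$) where the paper uses the last one, $\lX^{i,n_i}$, for which $C_q^i\lX^{i,n_i}=\X^{i+1}$ holds exactly.
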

    \begin{proof}

        By the third and fourth property of the local solver \eqref{eq:third_prop_ls}-\eqref{eq:fourth_prop_ls}, we can find  sequences $(W^i)_{i\in\mathbb{N}}$ and $(c^i)_{i\in\mathbb{N}}$  (with $c^i > 0$) such that 
        \begin{equation}
            \label{eq:incl_wi}
            W^i\in\partial  \tL^i_q(\lX^{i,n_i}),
        \end{equation}
        and 
        \begin{equation}
            \label{eq:upd_subgrad}
            c\norm{\lX^{i,n_i-1}-\lX^{i,n_i}}_F\geq \norm{W^i}_F,
        \end{equation}
        where $c = \limsup_i c^i$.
        We can express equation \eqref{eq:incl_wi} in the global domain of $\X$ by applying the generalized chain rule\footnote{We denote $\partial_{\lX}(\cdot)$ the set of subgradients of the expression in argument, interpreted as a function of the local variable $\lX$.}\footnote{We define the product between a set and a matrix as the set whose elements are the elements of the original set multiplied by that matrix.} \cite{rockafellar2009variational} to $L(C_q^i\lX^{i,n_i})$:
        \begin{multline}
            \label{eq:chain_rule}
            \partial\tL^i_q(\lX^{i,n_i}) =\partial_{\lX} \left(L(C_q^i\lX^{i,n_i})\right) \\= C_q^{iT}\partial L(C_q^i\lX^{i,n_i})= C_q^{iT}\partial L(\X^{i+1})
        \end{multline}
        where we used the fact that $C_q^i\lX^{i,n_i}=\X^{i+1}$ by definition of the algorithm.

        If $\lbar{\X}$ is an accumulation point, then there is (by definition) some index set $\mc{I}$ such that\footnote{$\lim_{i\in\mc{I}\to\infty} a_i$ is a short-hand notation for $\lim_{j\to\infty} a_{i_j}$ where $i_j$ is the $j$-th element in the ordered index set $\mc{I}$.} $\lim_{i\in\mc{I}\to\infty} \X^{i} = \accX$. Furthermore, from Lemma \ref{lem:residuals} and in particular \eqref{eq:lem_res_p2}, we also have 
        \begin{equation}
            \lim_{i\in\mc{I}\to\infty} \X^{i+1} =\lim_{i\in\mc{I}\to\infty} \X^{i}= \accX.
        \end{equation}
        As the number of nodes is finite, we can furthermore select $\mc{I}$ such that the sequence of updating nodes $(q^i)_{i\in\mc{I}}$ is a constant sequence, i.e. we only consider a single updating node $q$. From the continuity of the map $C_q(\X)$ it must also be that there is some $\lbar{C}_q\triangleq C_q(\accX)$ such that
        \begin{equation}
            \lim_{i\in\mc{I}\to\infty} C_q^i = \lim_{i\in\mc{I}\to\infty} C_q(\X^i) ={C}_q(\accX)=\lbar{C}_q.
        \end{equation}

        From \eqref{eq:incl_wi}, \eqref{eq:chain_rule} and the outer-semicontinuity of the set of subgradients \cite{rockafellar2009variational}, we have 
        \begin{equation}
            \label{eq:app_lsc}
            W^i\in C^{iT}_q\partial L(\X^{i+1}) \;\forall i \Rightarrow
            \lim_{i\in\mc{I}\to\infty} W^i\in \lbar{C}^T_q\partial L(\accX).
        \end{equation}
        From \eqref{eq:cons_lem_res} and \eqref{eq:upd_subgrad}, it must be that
        \begin{equation}
            \lim_{i\to\infty} \norm{W^i}_F=0,
        \end{equation}
        and thus  
        \begin{equation}
            \label{eq:above}
            \lim_{i\in\mc{I}\to\infty} W^i = 0.
        \end{equation}
        Combining \eqref{eq:above} with \eqref{eq:app_lsc} yields
        \begin{equation}
            0\in {C_q(\accX)}^T\partial L(\accX)
        \end{equation}
        which is the condition for $\accX$ to be a stationary point of the local problem \eqref{eq:local_problem}, based on the argument in \eqref{eq:chain_rule}. Note that the local problem \eqref{eq:local_problem} is equal to \eqref{eq:compact_problem} equipped with the additional constraint \eqref{eq:param_constraint}, i.e. $\lbar{\X}$ is a stationary point of 
        $$\min_{\X} L(\X) \quad\st \X = C_q(\accX)\lX,\; \lX\in\mathbb{R}^{((K-1)Q+M_q)\times Q}.$$

        We have shown that an accumulation point is a stationary point for at least a single node $q$, it remains to show that it is the case for any $q$. From Lemma \ref{lem:residuals} and in particular \eqref{eq:lem_res_p2},  if $\lbar{\X}$ is an accumulation point of $(\X^{i+1})_{i\in\mathcal{I}}$, then it is also an accumulation point of $(\X^{i+n})_{i\in\mathcal{I}}$ for any $n\geq 0$. Indeed,
        \begin{equation}
            \X^{i+n} = \X^i - (\X^i-\X^{i+1}) - \dots - (\X^{i+n-1}-\X^{i+n}),
        \end{equation}
        and therefore 
        \begin{multline}
            \lim_{i\to\infty} \X^{i+n} - \X^i =\\  \lim_{i\to\infty} (\X^i-\X^{i+1}) - \dots - (\X^{i+n-1}-\X^{i+n}) = 0,
        \end{multline}
        and finally
        \begin{equation}
            \lim_{i\in\mc{I}\to\infty} \X^{i} = \lim_{i\in\mc{I}\to\infty} \X^{i+n} = \accX.
        \end{equation}
        As we have shown that $\lbar{\X}$ is a stationary point for at least one node, it is a stationary point for all nodes as the sequence $(\X^{i+n})_{i\in\mathcal{I}}$ will be associated with node\footnote{assuming a sequential selection rule, but the conclusion stays valid as long as every node is selected infinitely many times.} $(q+n) \mod K$.
    \end{proof}

    Lemma \ref{lem:local_optimality} asserts that any accumulation point of DASF is a stationary point of all local problems at all nodes. We will use this result to show that these accumulation points are also stationary points of the centralized problem \eqref{eq:generic_problem}-\eqref{eq:generic_problem_ns}. For this, we need a technical condition that is usually satisfied with high probability if the number of constraints is not too high (see below). The condition can actually be viewed as a compressed version of the standard linear independence constraint qualifications (LICQ) in numerical optimization \cite{nocedal1999numerical},  except that they apply to the ``compressed'' gradients rather than the actual gradients of the constraint functions.

    \begin{proposition}[Compressed LICQ]
        \label{prop:compressed_licq}
        Let $\vartheta_j:\X\mapsto\eta_j(\X^T\y,\X^T\C_j)$ in the smooth case, and  $\vartheta_j:\X\mapsto\eta_j(\X_k^T\y_k,\X_k^T\C_{j,k})$ in the non-smooth case, and define the active constraint set $\mc{A}(\X)\triangleq \{j\in\mc{J}_I\;|\; \vartheta_j(\X)=0\}$. If the elements of the set 
        \begin{equation}
            \label{eq:compressed_licq}
            \{\blkd(\accX_1, \dots, \accX_K)^T\nabla \vartheta_j(\accX) \;|\; j\in\mc{A}(\accX)\cup\mc{J_E}\}
        \end{equation}
        are linearly independent matrices and $\accX$ is a stationary point of the local problems  for any node $q$, i.e.
        \begin{equation}
            \forall q\in\mc{K},\;0\in \lbar{C}^T_q\partial L(\accX),
        \end{equation}
        then it is also a stationary point of the global problem \eqref{eq:generic_problem}-\eqref{eq:generic_problem_ns}, i.e.
        \begin{equation}
            0\in\partial L(\accX).
        \end{equation}
    \end{proposition}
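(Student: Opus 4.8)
The plan is to make the compact stationarity condition $0\in C_q(\accX)^T\partial L(\accX)$ delivered by Lemma \ref{lem:local_optimality} concrete by unfolding the block structure of $C_q$ in \eqref{eq:def_Cq}, and then to show that the per-node conditions, taken together over all $q$, glue into a single vanishing subgradient. First I would compute $C_q(\accX)^T V$ for an arbitrary $V\in\partial L(\accX)$ partitioned as $V=[V_1^T,\dots,V_K^T]^T$ with $V_k\in\R^{M_k\times Q}$. From \eqref{eq:def_Cq}, the block of $C_q(\accX)^T V$ associated with $\lX_q$ equals $V_q$, while the blocks associated with $\lX_k$ ($k\neq q$) equal $\accX_k^T V_k$. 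Hence $0\in C_q(\accX)^T\partial L(\accX)$ is equivalent to the existence of $V^{(q)}\in\partial L(\accX)$ with $V^{(q)}_q=0$ and $\accX_k^T V^{(q)}_k=0$ for every $k\neq q$. Since $V^{(q)}_q=0$ also forces $\accX_q^T V^{(q)}_q=0$, the two facts merge into the single identity $\Sigma^T V^{(q)}=0$, where $\Sigma\triangleq\blkd(\accX_1,\dots,\accX_K)$ is exactly the matrix appearing in the compressed LICQ set \eqref{eq:compressed_licq}.

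The second step uses the subgradient calculus of the discussion following \eqref{eq:optimality_cond}. Writing $f$ for the smooth part $\varphi(\X^T\y,\X^T\B)$, each $V^{(q)}$ decomposes as $V^{(q)}=\nabla f(\accX)+\sum_{j\in\mc{A}(\accX)\cup\mc{J}_E}\lambda_j^{(q)}\nabla\vartheta_j(\accX)$, with $\lambda_j^{(q)}\geq 0$ on the active inequalities and free on the equalities (the active set being a property of $\accX$, hence independent of $q$). Applying $\Sigma^T$ and using $\Sigma^T V^{(q)}=0$ yields $\Sigma^T\nabla f(\accX)+\sum_j\lambda_j^{(q)}\,\Sigma^T\nabla\vartheta_j(\accX)=0$. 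Because the matrices $\{\Sigma^T\nabla\vartheta_j(\accX)\}_{j}$ are linearly independent by the compressed LICQ hypothesis \eqref{eq:compressed_licq}, this linear system fixes the multipliers uniquely; in particular the solution cannot depend on $q$, so there is a single family $(\lambda_j)_j$ with $\lambda_j^{(q)}=\lambda_j$ for all $q$. The candidate $V\triangleq\nabla f(\accX)+\sum_j\lambda_j\nabla\vartheta_j(\accX)$ is then independent of $q$ and satisfies $V=V^{(q)}$ for every $q$; since $V^{(q)}_q=0$ and every block index $k$ serves as the updating node for some $q$, all blocks of $V$ vanish, whence $V=0$. As the sign conditions on the $\lambda_j$ are inherited, $V\in\partial L(\accX)$, giving $0\in\partial L(\accX)$.

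The main obstacle is the non-smooth problem \eqref{eq:generic_problem_ns}, where the objective subgradient is no longer the single matrix $\nabla f(\accX)$ but carries the extra term $\partial(\gamma(\X^T\A))(\accX)$; the $q$-dependent choice of $\gamma$-subgradient then pollutes the linear system and the uniqueness argument above breaks. Here I would instead exploit that in \eqref{eq:generic_problem_ns} both $\gamma$ and the constraints are per-node block-separable: $\nabla\vartheta_j(\accX)$ is supported on a single block $k$ whenever $j\in\mc{J}_I^k\cup\mc{J}_E^k$, and $\partial(\gamma(\X^T\A))(\accX)$ is a direct sum of per-block subgradients. Consequently the single condition $V^{(q)}_q=0$ is precisely the block-$q$ stationarity condition, involving only block-$q$ quantities (the $q$-block of $\nabla f$, a subgradient of $\gamma_q$, and the multipliers of the constraints in $\mc{J}_I^q\cup\mc{J}_E^q$). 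I would then assemble a global subgradient block by block, taking for each $k$ the witness furnished by the updating node $q=k$; block-separability guarantees that the assembled $\gamma$-subgradient and multiplier family is jointly admissible, so the resulting $V\in\partial L(\accX)$ has every block equal to zero and $0\in\partial L(\accX)$. The remaining care is to justify the normal-cone/multiplier representation of $\partial\delta_{\mc{X}}(\accX)$ (via the constraint-qualification discussion following \eqref{eq:optimality_cond}) and to present both cases under one statement; checking that nonnegativity of the active-inequality multipliers survives the gluing is routine.
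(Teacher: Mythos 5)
The paper does not actually prove this proposition in-text---it defers to \cite{musluoglu2022unified_p2} for the smooth case and \cite{hovine2024distributed} for the non-smooth case---and your proposal is a correct, self-contained reconstruction of essentially that same argument: unfolding $C_q(\accX)^T$ blockwise to turn per-node stationarity into the existence of $V^{(q)}\in\partial L(\accX)$ with $V^{(q)}_q=0$ and $\blkd(\accX_1,\dots,\accX_K)^T V^{(q)}=0$, using linear independence of the compressed constraint gradients to force the multipliers (hence $V^{(q)}$ itself) to be independent of $q$ so that cycling $q$ over all nodes annihilates every block, and falling back on per-node block-separability of $\gamma$ and of the constraints to assemble the witness blockwise in the non-smooth case. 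The one loose end you flag---justifying the multiplier representation of $\partial\delta_{\mc{X}}(\accX)$---closes in one line: if $\sum_j\lambda_j\nabla\vartheta_j(\accX)=0$, left-multiplying by $\blkd(\accX_1,\dots,\accX_K)^T$ and invoking \eqref{eq:compressed_licq} forces all $\lambda_j=0$, so the compressed LICQ already implies the ordinary LICQ and the standard normal-cone formula is available.
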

    A proof of the above statement is available in \cite{musluoglu2022unified_p2} for the smooth case and in \cite{hovine2024distributed} for the non-smooth case. Note that this qualification is automatically violated if the number of active constraints exceeds $KQ^2$ (which is the dimension of the vector space in which the compressed gradients live). See \cite{musluoglu2022unified_p1, musluoglu2022unified_p2} for details and an extension of this condition to the case of arbitrary network topologies.

    \begin{theorem}[Convergence and Optimality]
        \label{th:inexact}
        Let $(\X^i)_{i\in\mathbb{N}}$ be a sequence generated by the DASF algorithm with inexact solver. Then if Assumption \ref{ass:main} is satisfied and the qualification \eqref{eq:compressed_licq} is also satisfied at the accumulation points of $(\X^i)_{i\in\mathbb{N}}$,  $(\X^i)_{i\in\mathbb{N}}$ converges to the set of stationary points of the global problem \eqref{eq:generic_problem}-\eqref{eq:generic_problem_ns}, that is
        \begin{equation}
            \lim_{i\to\infty}\min_{X\in\Omega}\norm{X-\X^i}_F =0,
        \end{equation}
        where $\Omega$ denotes the set of stationary points of problem \eqref{eq:generic_problem}-\eqref{eq:generic_problem_ns}.

        Furthermore, if the number of stationary points of  \eqref{eq:generic_problem}-\eqref{eq:generic_problem_ns} is finite, then $(\X^i)_{i\in\mathbb{N}}$ converges to a single point.
    \end{theorem}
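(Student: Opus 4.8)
The plan is to assemble Lemmas \ref{lem:mon_decrease}--\ref{lem:local_optimality} and Proposition \ref{prop:compressed_licq} into the two claims, with only standard compactness bookkeeping left to supply. First I would establish that the iterates admit accumulation points: by Lemma \ref{lem:mon_decrease} the sequence $(L(\X^i))_i$ is non-increasing, so every iterate lies in the sublevel set $\{\X : L(\X)\leq L(\X^0)\}$, which is compact by Assumption \ref{ass:main}. Hence $(\X^i)_i$ is bounded and its set of accumulation points $\mc{S}$ is non-empty. By Lemma \ref{lem:local_optimality}, every $\accX\in\mc{S}$ satisfies $0\in\lbar{C}_q^T\partial L(\accX)$ for all updating nodes $q$; invoking the compressed LICQ (assumed to hold at every accumulation point) together with Proposition \ref{prop:compressed_licq} then yields $0\in\partial L(\accX)$, i.e.\ $\accX\in\Omega$. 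Thus $\mc{S}\subseteq\Omega$, and in particular $\Omega\neq\emptyset$.

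For the first (set-convergence) claim I would argue by contradiction. If $\dist(\X^i,\Omega)\not\to 0$, there is some $\epsilon>0$ and a subsequence with $\dist(\X^{i_k},\Omega)\geq\epsilon$. Since this subsequence lives in the compact sublevel set, it has a further subsequence converging to some point, which is by definition an element of $\mc{S}$ and hence of $\Omega$; this contradicts $\dist(\X^{i_k},\Omega)\geq\epsilon$. Therefore $\min_{X\in\Omega}\norm{X-\X^i}_F = \dist(\X^i,\Omega)\to 0$, which is exactly the first statement.

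For the second claim, suppose $\Omega=\{p_1,\dots,p_m\}$ is finite. I would choose $\delta>0$ small enough that the balls $B(p_l,\delta)$ are pairwise disjoint, so that any two distinct half-radius balls $B(p_l,\delta/2)$ are separated by a fixed gap $\rho>0$. Because $\dist(\X^i,\Omega)\to 0$, all iterates eventually lie in $\bigcup_l B(p_l,\delta/2)$, hence in exactly one such (disjoint) ball each; because $\norm{\X^i-\X^{i+1}}_F\to 0$ by Lemma \ref{lem:residuals}, the consecutive step eventually falls below $\rho$, so an iterate cannot jump from one ball to a distinct one. Consequently the tail of $(\X^i)_i$ is trapped in a single ball $B(p_{j^\star},\delta/2)$, whose only $\Omega$-point is $p_{j^\star}$, and therefore $\norm{\X^i-p_{j^\star}}_F=\dist(\X^i,\Omega)\to 0$, so the sequence converges to the single point $p_{j^\star}$.

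The conceptual content is already carried by the preceding lemmas and Proposition \ref{prop:compressed_licq}, so the remaining work is essentially routine. The one step requiring genuine care is the trapping argument of the last paragraph: one must fix the separation $\rho$ \emph{before} invoking the two limits, and order the quantifiers precisely (first make the iterates enter the small balls, then force the steps below $\rho$), so as to rigorously rule out the sequence oscillating indefinitely between distinct stationary points.
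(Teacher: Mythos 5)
Your proposal is correct and follows the same route as the paper: accumulation points exist by compactness (Assumption \ref{ass:main}) and Lemma \ref{lem:mon_decrease}, are stationary for every local problem by Lemma \ref{lem:local_optimality}, and hence globally stationary via Proposition \ref{prop:compressed_licq}; the set-convergence and single-point claims then follow from the vanishing step length of Lemma \ref{lem:residuals}. The only difference is that you spell out the standard compactness and ball-trapping arguments that the paper delegates to the proof of Theorem 4 in the cited reference, and your quantifier ordering in the trapping step is handled correctly.
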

    \begin{proof}
        Under Assumption \ref{ass:main}, Lemma \ref{lem:local_optimality} asserts that  any accumulation point is a stationary point of the local problem at every node. Proposition \ref{prop:compressed_licq} ensures that under the qualification \eqref{eq:compressed_licq}, the local stationarity at every node implies stationarity for the global problem, proving the first part of the theorem (i.e., convergence to the set of stationary points). The second part (convergence to a single point) is a direct consequence of Lemma \ref{lem:residuals} and is a standard analysis result. See the proof of \cite[Theorem 4]{musluoglu2022unified_p2} for details.
    \end{proof}

    As a consequence of the above theorem, interleaving the steps of DASF with a single descent step is sufficient to ensure convergence to a stationary point.

    \subsection{A Bound on the Convergence Rate}

    Obtaining a convergence rate for the objective value would unfortunately require much more assumptions on the functions involved in \eqref{eq:generic_problem}-\eqref{eq:generic_problem_ns}, than what we assumed so far. Still, we can obtain a lower bound on the convergence rate of a local optimality measure at each node, that is 
    \begin{equation}
        \label{eq:wi}
        w^i\triangleq\dist (0,\partial\tL^i_{q^i}(\lX^{i,n_i}))\triangleq \min_{W\in \partial\tL^i_{q^i}(\lX^{i,n_i})} \norm{W}.
    \end{equation}
    As our actual objective is to find stationary points rather than an actual minimum, $w^i$ is an appropriate measure of optimality. We could also have considered the step-length $\norm{X^i-X^{i+1}}$ as a measure of optimality, which would yield a rate similar to the one presented hereafter. 

    \begin{proposition}[Convergence Rate Bound]
        Let $(\X^i)_{i\in\mathbb{N}}$ be a sequence generated by the DASF algorithm with inexact solver, and let $w^i$ be as defined in \eqref{eq:wi}. Then there is some positive constant $a$ such that
        \begin{equation}
            \label{eq:rate}
            \min_{j\leq i} w^j \leq a \frac{\sqrt{L(\X^0)-L^\star}}{\sqrt{i+1}},
        \end{equation}
        where $L^\star$ is the minumum value of $L$.
    \end{proposition}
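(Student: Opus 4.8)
The plan is to derive a single per-iteration inequality bounding the optimality measure $w^i$ by the objective decrease $L(\X^i)-L(\X^{i+1})$, and then to sum it telescopically, exactly as in the standard $O(1/\sqrt{i})$ analysis of descent methods on nonconvex objectives. First I would observe that $w^i$ is governed by the \emph{last} inner step of the solver: applying condition \eqref{eq:third_prop_ls} with $j=n_i-1$ produces some $W\in\partial\tL^i_{q^i}(\lX^{i,n_i})$ with $\norm{W}_F\le c^i\norm{\lX^{i,n_i-1}-\lX^{i,n_i}}_F$, and since $w^i=\dist(0,\partial\tL^i_{q^i}(\lX^{i,n_i}))\le\norm{W}_F$ by definition of the distance-to-set, this gives $w^i\le c^i\norm{\lX^{i,n_i-1}-\lX^{i,n_i}}_F$. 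Next I would control the step length by the objective decrease: inequality \eqref{eq:upr_bound_x_step}, established inside the proof of Lemma \ref{lem:residuals}, applied with $j=n_i-1$ yields $\norm{\lX^{i,n_i-1}-\lX^{i,n_i}}_F^2\le \big(L(\X^i)-L(\X^{i+1})\big)/R^i$. Combining the two gives the key per-iteration estimate $(w^i)^2\le \frac{(c^i)^2}{R^i}\big(L(\X^i)-L(\X^{i+1})\big)$.

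Then I would remove the iteration dependence of the constants using condition \eqref{eq:fourth_prop_ls}. Since $\liminf_i R^i>0$ while every $R^i>0$, the infimum $\underline R\triangleq\inf_i R^i$ is strictly positive (only finitely many terms can lie below the asymptotic bound, and each is positive); symmetrically $\bar c\triangleq\sup_i c^i<\infty$. Hence $(w^i)^2\le M\big(L(\X^i)-L(\X^{i+1})\big)$ for all $i$, with $M\triangleq \bar c^2/\underline R<\infty$. Summing over $j=0,\dots,i$ and telescoping the right-hand side leaves $\sum_{j=0}^i (w^j)^2\le M\big(L(\X^0)-L(\X^{i+1})\big)\le M\big(L(\X^0)-L^\star\big)$, where $L^\star$ is finite and attained because Assumption \ref{ass:main} (continuity with compact sublevel sets) guarantees a minimizer and hence $L(\X^{i+1})\ge L^\star$. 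Finally, bounding the minimum term by the average, $(i+1)\big(\min_{j\le i}w^j\big)^2\le \sum_{j=0}^i (w^j)^2$, and taking square roots produces \eqref{eq:rate} with $a=\sqrt{M}=\bar c/\sqrt{\underline R}$.

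The only genuinely delicate point is the passage from the \emph{asymptotic} guarantees of \eqref{eq:fourth_prop_ls} to the \emph{uniform} constant $M$ needed for a clean telescoping sum over all iterations, including the initial transient. This is handled by the elementary observation that a sequence of strictly positive reals with strictly positive $\liminf$ has strictly positive infimum, and dually that a positive sequence with finite $\limsup$ has finite supremum, so that $M$ is indeed finite. Everything else is a routine assembly of the sufficient-decrease and stationarity-residual conditions already isolated in Definition \ref{def:inexact_solver}; no new regularity assumptions on $\varphi$, $\gamma$, or the $\eta_j$ are required, which is precisely why only the $O(1/\sqrt{i+1})$ rate on $\min_{j\le i}w^j$---rather than a rate on the objective value itself---can be obtained at this level of generality.
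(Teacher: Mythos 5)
Your proof is correct and follows essentially the same route as the paper's: bound $(w^i)^2$ by the per-iteration decrease via \eqref{eq:third_prop_ls} and \eqref{eq:upr_bound_x_step}, telescope the sum, and bound the minimum by the average. You are in fact slightly more careful than the published argument, which writes the per-iteration constant as $R^i/c^i$ rather than the correct $R^i/(c^i)^2$ and does not spell out why the $\liminf$/$\limsup$ conditions in \eqref{eq:fourth_prop_ls} yield a \emph{uniform} constant covering the initial transient; both refinements only change the value of $a$, not the stated rate.
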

    \begin{proof}
        This follows from a well-known proof argument \cite{beck2017first, themelis2018proximal}.
        From \eqref{eq:upr_bound_x_step} and condition \eqref{eq:third_prop_ls}, we have 
        \begin{equation}
            L(\X^i)-L(\X^{i+1}) \geq \frac{R^i}{c^i}(w^i)^2.
        \end{equation}
        Summing the inequality from $0$ to $i$ and telescoping yields
        \begin{equation}
            L(\X^0)-L(\X^{i+1}) \geq \sum_{j=0}^i \frac{R^j}{c^j}(w^j)^2.
        \end{equation}
        From condition \eqref{eq:fourth_prop_ls}, there is some $r>0$ such that 
        \begin{equation}
            \liminf_{i\to\infty} \frac{R^i}{c^i} =r ,
        \end{equation}
        and hence, using the fact that $L^\star\leq L(\X^{i+1})$,
        \begin{equation}
            L(\X^0)-L^\star \geq r \sum_{j=0}^i(w^j)^2 \geq r(i+1)\min_j (w_j)^2,
        \end{equation}
        which can be rearranged to obtain \eqref{eq:rate} with $a=r^{-1/2}$.

    \end{proof}
    For a given level of optimality $\min w^i$, the bound gives a linear relationship between the required number of iterations and the current error $L(\X^i)-L^\star$. In other words, if the algorithm is set to stop after a sufficiently small norm of the subgradient is reached, we can guarantee that the algorithm will run for a number of iterations (hence time) that is at most proportional to the current error $L(\X^i)-L^\star$.

\section{Numerical Experiments}
\label{sec:sims}

\begin{figure*}
\includegraphics[width=\textwidth]{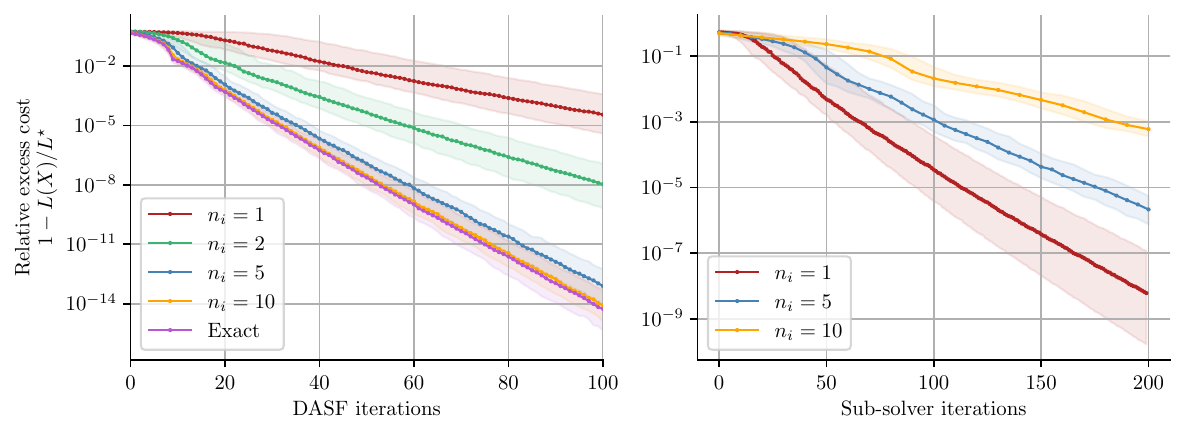}
\caption{Convergence of the DASF algorithm with inexact solver for the Max-SNR problem. Each curve was generated by 1000 Monte-Carlo runs. Solid curves depict the median trajectories, while the shaded areas of corresponding color depict the the 5\%-95\% percentiles regions.}
\label{fig:sim}
\end{figure*}

In order to qualitatively demonstrate the convergence of the DASF algorithm with inexact solver, we consider an SNR maximization (Max-SNR) problem \cite{van1988beamforming} defined as 
\begin{equation}
    \label{eq:sim_prob}
    \max_{\x} \E{\|\x^T\y(t)\|^2}\quad \st \E{\|\x^T\bm{n}(t)\|_F^2} =  1
\end{equation}
where $x$ denotes the $M\times 1$ spatial filter ($Q=1$), $\y(t)$ is modelled as $\y(t) = \bm{a}{d}(t) + \bm{n}(t)$, $\bm{n}(t)$ is an $M$-dimensional i.i.d white Gaussian noise signal, $d(t)$ is a single-channel (unknown) target signal, and where $\bm{a}$ is some unknown mixing vector with entries sampled from $\mc{N}(0,1)$. The purpose of the multi-input single-output (MISO) filter $\x$ is to maximize the SNR of $d$ in the output filtered signal $z(t)=x^T\y(t)$. It is assumed that the nodes of the WSN are able to collect samples of both $\y(t)$ and $\bm{n}(t)$. For our simulations we set $M=100$, $M_k=10$, $K=10$, i.e., there are 10 nodes, with each node collecting 10 different channels of the 100-channel signals $\y(t)$ and $\bm{n}(t)$. The noise variance is 10, while the signal variance is 1. Our figure of merit is the relative excess cost, defined as
\begin{equation}
    1-\frac{L(\X)}{L^\star},
\end{equation}
where $L^\star$ denotes the optimal objective value of \eqref{eq:sim_prob}. Note that the solution of \eqref{eq:sim_prob} can be solved by computing a generalized eigenvalue decomposition (GEVD) on the covariance matrices $\E{\y(t)\y(t)^T}$ and $\E{\bm{n}(t)\bm{n}(t)^T}$ \cite{ghojogh2019eigenvalue}. 

Figure  \ref{fig:sim} depicts  the convergence of the inexact version of DASF applied to problem \eqref{eq:sim_prob} with different numbers of sub-solver iterations ($n^i$). The exact (GEVD) solver uses Cuppen's divide and conquer algorithm \cite{1999lapack}, while the inexact solver relies on the generalized power method \cite{kozlowski1992power}.  On the left part of the figure, we see that the algorithm converges to the optimal objective value with a single sub-solver iteration per-node, and that 10 sub-solver iterations are sufficient to reach a convergence rate at par with the exact solver. Although, based on this first plot,  one might conclude that more  sub-solver iterations is better, the right part of Figure \ref{fig:sim} tells a very different story. The same figure of merit is depicted, but in terms of the total number of sub-solver iterations, rather than ``global'' DASF iterations. This second plot therefore depicts the algorithm's convergence as a function of ``real'' time and offers a fairer comparison. Indeed, the sub-solver with the lowest number of iterations ($n_i=1$) outperforms all the other ones. It seems that the smaller progress achieved at each step is largely compensated by the fact that the rate at which the updating node role is passed-on is much larger, resulting in the network being able to update the filter in more ``directions'' (i.e. with different subspace constraints \eqref{eq:param_constraint}) in a given time-interval, albeit with smaller progress at each updating node. These more frequent updates of the updating node do come with a cost, as a new batch of $N$ compressed samples data should normally be retransmitted every time the updating node changes. This can be largerly compensated by applying the data-reuse technique described in \cite{musluoglu2022improved}
, making the DASF algorithm with inexact local solver an excellent option for adaptively tracking a spatial filter.

\section{Conclusion}
\label{sec:concl}
We have extended the convergence results of the DASF algorithm to a setting where the exact solver is replaced by an inexact, iterative one. We have shown, both by the means of proofs and simulations, that the convergence properties of DASF were preserved under more relaxed technical conditions. Furthermore, the simulations have shown that DASF used with an iterative solver could display better convergence properties, at the cost of a slighly increased bandwidth, assuming a data-reuse scheme as in \cite{musluoglu2022improved}.

\appendix

\subsection{Common Inexact Solvers}
\label{apx:common_solvers}

We describe in this section a couple of methods satisfying the requirements set by Definition \ref{def:inexact_solver}. In what follows, we consider methods to minimize a real-valued function $h:\R^{M\times Q}\mapsto \R$, whose precise structure (e.g. smoothness, convexity, etc.) will be redefined for every method. We first state a useful definition:
\begin{definition} A smooth function $h:\R^{M\times Q}\mapsto \R$ is said to have $R$-Lipschitz gradients if there is some $R>0$ such that for any $X,Y\in\R^{M\times Q}$
\begin{equation}
    \label{eq:lipshitz}
    \norm{\nabla h(X) - \nabla h(Y)}_F \leq R\norm{X-Y}_F.
\end{equation}
In addition,\cite[Descent Lemma]{beck2017first}, \eqref{eq:lipshitz} also implies that
\begin{equation}
    \label{eq:descent_lemma}
    h(Y) \leq h(X) + \langle \nabla h(X), Y-X\rangle_F + \frac{R}{2}\norm{X-Y}_F^2,
\end{equation}
where $\langle A, B \rangle_F\triangleq \tr{A^TB}$.
\end{definition}

\subsubsection{Line-Search Methods: Gradient Descent and Newton's Method}

Line search methods are defined by the update rule
\begin{equation}
    \label{eq:update_rule_line_search}
    \X^{i+1} = \X^i + \mu^i P^i,
\end{equation}
where $P^i$ is the \emph{search direction} and $\mu^i$ is an appropriately chosen step-size. 

\paragraph{Gradient Descent}
In the case of Gradient Descent, we have $P^i=-\nabla h(\X^i)$ and hence
\begin{equation}
    \label{eq:update_rule_gradient}
 X^{i+1}-X^i=-\mu^i\nabla h(\X^i).
\end{equation}
As a consequence, we have 
\begin{equation}
    \frac{1}{\inf_i\mu^i}  \norm{X^{i+1}-X^i}\geq\norm{\nabla h(\X^i)}.
\end{equation}
As $h$ is smooth, $\partial h(\X^i) = \{\nabla h(\X^i)\}$ \cite{rockafellar2009variational}, and condition \eqref{eq:third_prop_ls} of Definition \ref{def:inexact_solver} is therefore satisfied.

 Applying the Descent Lemma \eqref{eq:descent_lemma}  to $\X^i,\X^{i+1}$  we have
\begin{multline}
    \label{eq:descent_lemma_line_search}
    h(\X^i)-h(\X^{i+1})\\ \geq - \langle \nabla h(\X^i), X^{i+1}-X^i\rangle_F - \frac{R}{2}\norm{X^{i+1}-X^i}_F^2.
\end{multline}
From \eqref{eq:update_rule_gradient}, \eqref{eq:descent_lemma_line_search} becomes
\begin{equation}
    h(\X^i)-h(\X^{i+1}) \geq \left( \frac{1}{\mu^i} - \frac{R}{2}\right) \norm{X^{i+1}-X^i}_F^2.
\end{equation} and condition \eqref{eq:second_prop_ls} is thus also satisfied if $\sup_i \mu^i < \frac{2}{R}$.

\paragraph{Newton's Method}
Newton's step consists in setting $P^i = -\left(\nabla^2 h(\X^i)\right)^{-1}\nabla h(\X^i)$. Hence, from the update rule \eqref{eq:update_rule_line_search}, we have
\begin{equation}
    \label{eq:update_rule_newton}
    -\frac{1}{\mu^i}\nabla^2 h(\X^i) (\X^{i+1}-\X^i) = \nabla h(\X^i).
\end{equation}
Taking the norm on both sides and denoting as $\lambda_{\max}$ an upper bound on the largest eigenvalue of the Hessian (across all iterations),
\begin{equation}
    \label{eq:newt_upd_2}
    \frac{\lambda_{\max}}{\inf_i\mu^i}  \norm{X^{i+1}-X^i}\geq\norm{\nabla h(\X^i)},
\end{equation}
and condition \eqref{eq:third_prop_ls} is fulfilled by the same reasoning as for gradient descent.

Taking the inner product of both sides of \eqref{eq:update_rule_newton} with $(\X^{i+1}-\X^i)$ and denoting the Cholesky factorization of the Hessian as $\nabla^2 f(\X^i)\triangleq UU^T$, we have 
\begin{equation}
 -\langle \nabla h(\X^i), X^{i+1}-X^i\rangle_F  =  \frac{1}{\mu^i}\norm{ U^T (\X^{i+1}-\X^i)}^2.
\end{equation}
Hence 
\begin{equation}
    -\langle \nabla h(\X^i), X^{i+1}-X^i\rangle_F  \geq  \frac{\lambda_{\min}}{\sup_i\mu^i}\norm{  \X^{i+1}-\X^i}^2
\end{equation}
where $\lambda_{\min}$ denotes a lower bound on the smallest eigenvalue of the Hessian (across all iterations). Combining the above inequality with \eqref{eq:descent_lemma_line_search} yields
\begin{equation}
    h(\X^i)-h(\X^{i+1}) \geq \left(  \frac{\lambda_{\min}}{\sup_i\mu^i} - \frac{R}{2}\right) \norm{X^{i+1}-X^i}_F^2,
\end{equation}
and condition \eqref{eq:second_prop_ls} is satisfied as long as the step-size satisfies ${\sup_i\mu^i} < \frac{2\lambda_{\min}}{R}$ (note that this also implies that the Hessian must be positive definite).

In practice, the bounds on the step-size cannot be known apriori, but the same results can be obtained if a backtracking line-search  \cite{nocedal1999numerical, beck2017first, armijo1966minimization} is used to obtain the step-size (we omit the analysis for brevity).


\subsubsection{Regularized Exact Solver}
Given some previous estimate of the solution $\X^-$, a regularized exact solver would update the solution as (we drop any iteration index, as the solver only performs a single iteration)
\begin{equation}
    \label{eq:reg_solver}
    \X^+ = \argmin_X h(\X) +\mu \norm{\X-\X^-}_F^2
\end{equation}
instead of $\min_X h(\X)$, where $\mu>0$ is a freely chosen parameter. From the optimality of $\X^+$ in \eqref{eq:reg_solver}, we have
\begin{equation}
    h(\X^-) -  h(\X^+)\geq\mu \norm{\X^+-\X^-}_F^2,
\end{equation}
and condition \eqref{eq:second_prop_ls} is satsfied. Furthermore, the optimality of $\X^+$ also implies that (using the sum-rule for subgradients \cite{Royset2021})
\begin{equation}
    0\in\partial h(\X^+) + 2\mu (X^+ - X^-),
\end{equation}
And thus there is $W\triangleq 2\mu (X^- - X^+)\in\partial h(\X^+)$, and condition \eqref{eq:third_prop_ls} is trivially satisfied.

\subsubsection{Power Method}
We will rely on the proof available in \cite{attouch2013convergence} that the projected gradient algorithm satisfies conditions \eqref{eq:second_prop_ls} and \eqref{eq:third_prop_ls} of Definition \ref{def:inexact_solver}, even when the constraint set is non-convex. The projected gradient algorithm is defined by
\begin{equation}
    \label{eq:update_projected_gradient}
    X^{i+1} = P_\mc{C}\left(X^i-\mu \nabla h(X^i)\right)
\end{equation}
where $P_\mc{C}$ is the projection on some constraint set $\mc{C}$ and $\mu$ some step-size. 
We will show that the power method is a particular case of \eqref{eq:update_projected_gradient}. The power method finds the eigenvector associated with the largest eigenvalue of some matrix $A$. It therefore finds a solution of 
\begin{equation}
    \label{eq:quad_min}
    \min_x -x^TAx\quad \st x^Tx = 1.
\end{equation}
The power method's udpate rule is 
\begin{equation}
    \label{eq:update_rule_power}
    x^{i+1} = \frac{Ax^i}{\norm{Ax^i}}.
\end{equation}
Let us now assume that we solve the following (equivalent) problem using a projected gradient algorithm:
\begin{equation}
    \min_x -x^T\frac{(A-I)}{2\mu}x\quad \st x^Tx = 1.
\end{equation}
Denoting the objective $h$, its gradient is
\begin{equation}
    \nabla h(x) = \frac{(I-A)}{\mu}x,
\end{equation}
The projected gradient method with step-size $\mu$ for this problem is thus
\begin{equation}
    x^{i+1} = P_{\mc{C}}\left(x^i - \mu \frac{x^i - Ax^i}{\mu}\right) =  P_{\mc{C}}\left(Ax^i\right)
\end{equation}
with $\mc{C}$ denoting in this case the unit ball, which is equivalent to the update rule of the power method \eqref{eq:update_rule_power}. Note that we skipped the discussion on the condition that the step-size should be smaller than the inverse of the Lipschitz constant of $\nabla h(x)$ \cite{attouch2013convergence}, and simply mention that this condition can always be enforced by applying the proper scaling on $A$, as this changes neither the solution of \eqref{eq:quad_min} nor the update rule \eqref{eq:update_rule_power}.

\bibliographystyle{IEEEtran}
\bibliography{main_biblio}
\end{document}